\newcommand{\m}{$\mathsf{M}$}
\title{Distributed Data Storage Systems with Opportunistic Repair}
\author{Vaneet Aggarwal, Chao Tian, Vinay A. Vaishampayan, and Yih-Farn R. Chen \thanks{V. Aggarwal, and Y. R. Chen are with  AT\&T Labs-Research, Bedminster, NJ 07921, email: \{vaneet, chen\}@research.att.com. C. Tian is with the Department of Electrical Engineering and Computer Science, University of Tennessee, Knoxville, TN, email: chao.tian@utk.edu. V. A. Vaishampayan is with the Department of Engineering Science and Physics, City University of New York, Staten Island, NY 10314, email: vavaishmapayan@icloud.net. This work was done in part when the authors were with AT\&T Labs-Research, Florham Park, NJ 07932.

This work was presented in part at IEEE Infocom 2014.}}
\newtheorem{theorem}{Theorem}
\newtheorem{lemma}{Lemma}
\newtheorem{corollary}{Corollary}
\begin{document}
\maketitle
\begin{abstract}
The reliability of erasure-coded distributed storage systems, as
measured by the mean time to data loss (MTTDL), depends
on the repair bandwidth of the code. Repair-efficient codes provide
reliability values several orders of magnitude better than conventional
erasure codes. Current state of the art codes fix the number
of helper nodes (nodes participating in repair) a priori. In practice, however, it is
desirable to allow the number of helper nodes to be adaptively
determined by the network traffic conditions. In this work, we propose
an opportunistic repair framework to address this issue. It is shown
that there exists a threshold on the storage overhead, below which
such an opportunistic approach does not lose any efficiency from the optimal storage-repair-bandwidth tradeoff; i.e. it is possible to construct a code simultaneously optimal for different numbers of helper nodes.
We further examine the benefits of such opportunistic codes, and
derive the MTTDL improvement for two repair models: one
with limited total repair bandwidth and the other
with limited individual-node repair bandwidth. In both settings,
we show orders of magnitude improvement in MTTDL. Finally, the proposed framework is examined in a network setting where a significant improvement in MTTDL is observed.
\end{abstract}

\section{Introduction}

Efficient data storage systems based on erasure codes have attracted much attention recently, because they are able to provide reliable data storage at a fraction of the cost of those based on simple data replication. In such systems, the data shares stored on a server or disk may be lost, either due to physical disk drive failures, or due to storage servers leaving the system in a dynamic setting. To guarantee high reliability, the lost  data shares must be repaired and placed on other servers. The total amount of data that needs to be transferred during this repair phase should be minimized, both to reduce network traffic costs and to reduce repair time. Dimakis {\em et al.}  \cite{Dimakis:10} recently proposed a framework to investigate this tradeoff between the amount of storage at each node ({\em i.e.,} data storage) and the amount of data transfer for repair ({\em i.e.,} repair bandwidth).

In the setting considered in \cite{Dimakis:10}, there are $n$ nodes, data can be recovered from any $k$ nodes, and the lost data share needs to be regenerated using (certain information obtained from) $d\ge k$ helper nodes. It was shown that for any fixed $d>k$, there exists a natural tradeoff between the total amount of repair traffic bandwidth and the storage overhead; the two extreme points are referred to as the minimum storage regenerating (MSR) point and the minimum bandwidth regenerating (MBR) point, respectively. In general, by using $d>k$ helper nodes, the total repair traffic can be reduced, compared to the naive (and currently common) practice of using only $k$ helper nodes.

There are two repair strategies considered in the literature. The first is functional repair in which the coded information on the repaired node may differ from that of the failed node, 
but the repaired system maintains the code properties.  The code construction for functional repair in \cite{Dimakis:10} uses network coding \cite{Yeung:00}, and a survey for using network coding in distributed storage can be found in \cite{Dimakis:11}. The second strategy, exact repair, requires that the failed disk is reconstructed exactly. Previous work for exact repair includes \cite{RashmiShah:12:1}\cite{RashmiShah:12:2}\cite{RashmiShah:11}\cite{Cadambe:11}\cite{PapailiopoulosDimakisCadambe:11}\cite{Tamo:11}\cite{CadambeHuang:11}\cite{Chao}\cite{evenodd}. In this paper, we will consider functional repair for distributed storage systems.

The number of helper nodes, $d$, is a design parameter in \cite{Dimakis:10}; however, in practice it may not be desirable to fix $d$ a priori. As an example, consider the dynamic peer-to-peer distributed storage environment, where peers are geo-distributed in a wide area without a centralized control mechanism. The peers may choose to join and leave the system in a much less controlled manner than in a data center setting. One example of such a system is the Space Monkey project \cite{spacemonkey}, and another is the open source peer-to-peer storage sharing solution built in Tahoe-LAFS \cite{tahoe}. In such systems, at the time of node repair it is desirable to utilize as many helper nodes as possible to achieve the most efficient repair, instead of accessing only a fixed subset of $d$ available nodes. In other words, instead of designing a code for a single value of $d$, it would be desirable for a code to be universally applicable for multiple values of $d$.


In this work, we address this problem and propose an erasure coding approach with opportunistic repair, where a single universal code is able to regenerate the share on the lost node by using any $d$ helper nodes, in a set $D$ of available choices of $d$ values. We develop fundamental bounds on the performance, and evaluate the performance improvement in several example scenarios.

We first investigate the tradeoff between storage and the repair bandwidths (one bandwidth for each $d$ value), and provide a characterization of the complete tradeoff region through an analysis technique similar to that used in \cite{Dimakis:10}. A particularly interesting question arises as to whether there is a loss for any given $d\in D$ by taking this opportunistic approach, when compared to the case in \cite{Dimakis:10} where the parameter $d$ is fixed a-priori. We find that there is a critical storage overhead threshold, below which there is no such loss; above this threshold, the universality requirement of the code indeed incurs a loss. In particular, the MSR points for individual $d$ values are simultaneously achievable in the opportunistic setting; this phenomenon for the special case of MSR codes has  in fact been observed in an asymptotically optimal  code construction \cite{Cadambe:11}.

The reliability of storage systems is usually a foremost
concern to the service provider and the users. Data loss events can
be extremely costly: consider for example the value of
data in systems storing financial or medical records. For
this reason, storage systems must be engineered such
that the chance of an irrecoverable data loss is extremely
low, perhaps on the order of one in 10 million objects per
10,000 years of operation - as is the case of the Amazon Glacier Storage Service.
The reliability of distributed storage systems is usually measured using the mean time to data loss (MTTDL) of the system. We analyze the reliability under two models, the first in which different failed disks are repaired serially one after the other, and the second with parallel repair. For both of these models, we show that the MTTDL  is improved by a multiplicative factor of $(n-k)!$, when compared to that without opportunistic repair. This translates to a significant improvement, and is usually many orders of magnitude better, even for lower values of $n$ and $k$. For $n=51$ and $k=30$, this improvement is a multiplicative factor of around $5.1 \times 10^{19}$, while even for the parameters chosen in Facebook HDFS ($n=14$ and $k=10$ ) \cite{facebook}, this improvement is a factor of 24.


In practical systems, the bandwidths from all the nodes are not the same. Hence, there is an additional optimization step: choose a subset of nodes among the active ones to repair a failed node. Even with different bandwidths of different links among the nodes, we find that the opportunistic distributed storage system helps in repair time and hence the MTTDL increases by orders of magnitude.

The remainder of the paper is organized as follows. Section II gives the background and introduces  the system model.   Section III gives our results on functional repair, and the loss for an opportunistic system as compared to an optimized system for a single value of $d$. Section IV analyzes the mean time to data loss of an opportunistic distributed storage system. Section V provides performance comparisons in a few example cases, and Section VI concludes the paper.

\section{Background and System Model}
\subsection{Distributed Storage Systems}

\begin{figure}[tb] 
   \centering
   \includegraphics[trim=0in 3in .3in 2in, clip, width=0.8\columnwidth]{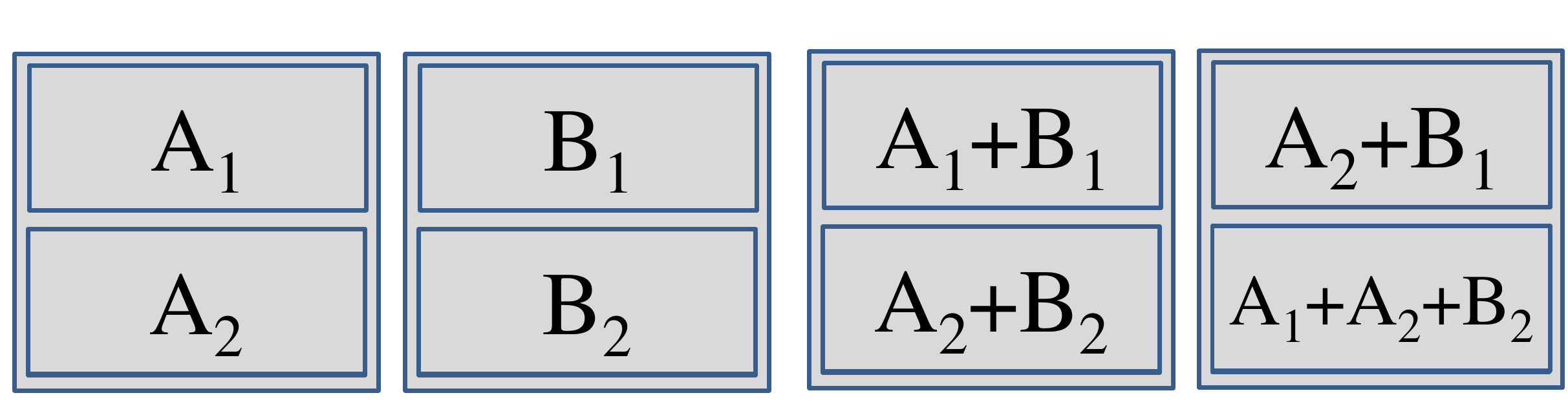}
   \caption{A (4,2) MDS binary erasure code (\cite{evenodd}).
Each storage node stores two blocks that are linear binary
combinations of the original data blocks $A_1$, $A_2$, $B_1$ and $B_2$, where the total file size is $\mathsf{M} = 4$ blocks. }
   \label{fig:code_place}
\end{figure}

A distributed storage system consists of multiple storage nodes that are  connected in a
network. The issue of code repair arises when a storage node of the system fails. Consider an example with $n=4$ distributed nodes, which can be recovered from any $k=2$ nodes as shown in Figure \ref{fig:code_place}. If only one node fails and needs to be repaired, the conventional strategy is to get all the $M=4$ blocks from two active nodes. However, as shown in \cite{Dimakis:10}, fewer than $M$ blocks are sufficient for repairing a failed node and in fact three blocks suffice in the above example. For example, to repair the first node, $B_1$, $A_1+B_1$, and $A_2+B_1$ are enough to get $A_1$ and $A_2$. Similarly, to repair the second node, $A_2$, $A_2+B_2$ and $A_2+B_1$ are sufficient to get $B_1$ and $B_2$. To repair the third node, $A_1-A_2$, $-B_1+B_2$ and $A_2+B_1$ are sufficient to get $A_1+B_1$ and $A_2+B_2$. To repair the fourth node, $A_1$, $B_1-B_2$ and $A_2+B_2$ are sufficient to get $A_2+B_1$ and $A_1+A_2+B_2$.

In this paper, we will consider a mode of repair called functional repair, in which the failed block may be repaired with possibly different data than that of the failed node, as long as the repaired system maintains the code properties (repair bandwidth and the ability to recover from $n-k$ erasures). The authors of \cite{Dimakis:10} derive a tradeoff between the amount of storage at each node and the repair bandwidth for a given number of nodes $n$, number of nodes from which the data should be recovered $k$ and the number of nodes $d$ that can be accessed for repair. Let each node store $\alpha$ bits, let $\beta_d$ bits be downloaded from each of the $d$ nodes for repair and let the total data be \m\,  bits. It was shown in \cite{Dimakis:10} that the optimal storage-repair-bandwidth tradeoff, i.e., $\alpha$ vs. $\beta_d$, satisfies
\begin{equation}
 \sum_{i=0}^{k-1}\min(\alpha, (d -i)\beta_{d}) \ge \mathsf{M}. \label{betastar}
\end{equation}
For a given $\alpha$, the minimum $\beta_d$ satisfying the above is given as $\beta_d^*(\alpha)$. The codes associated with  the two extremes of this tradeoff are referred to as minimum-storage regenerating (MSR)  and minimum-bandwidth regenerating (MBR) codes. An MSR code has a minimum storage overhead requirement per node while an MBR point has the minimum repair bandwidth.

One particularly important observation in this example code is as follows: there are two possible choices of the value $d$, which is $d=3$ (with repair bandwidth from each node as 1) and $d=2$ (with repair bandwidth from each node as 2). It is natural to ask whether this flexible choice of $d$ is generally available, and whether this flexibility incurs any loss of efficiency.

\subsection{Motivating Example in a Simple Network}
\begin{figure}[tb] 
   \centering
   \includegraphics[trim=.8in 1in 2.2in 1.3in,clip,width=1.5in]{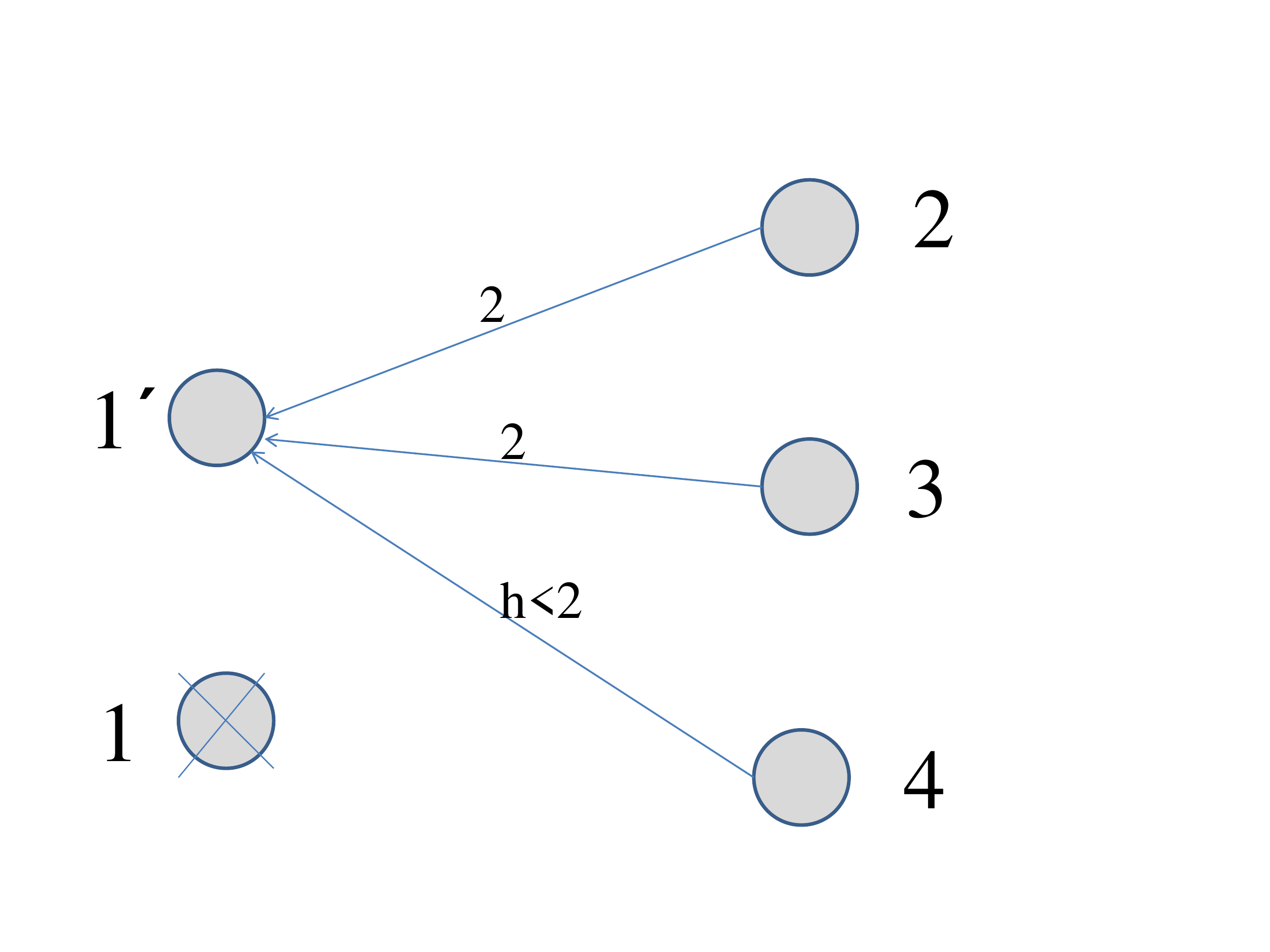}
   \caption{Example network with less bandwidth from one node.}
   \label{fig:ntwk}
\end{figure}

Consider  a file of size {\m} is encoded into $n$ shares, which are placed on $n$ distinct nodes\footnote{We do not distinguish between node and disk from here on.}.  We assume that the $n$ nodes are connected via a network and that $R_{ij},~1 \leq i,j \leq n,~i \neq j$ represents the bandwidth from node $i$ to node $j$. In a typical scenario the rates will obey a set of constraints and these constraints will have an impact on the time to repair a single failed node. As an illustration consider the network in Fig.~\ref{fig:ntwk}, which shows a failed node (node 1), an $(n,k)=(4,2)$ MDS code (such as a Reed-Solomon code), and the link bandwidths between the helper nodes and the node to be restored (node $1'$). Consider two scenarios, the first where $d=k=2$, the second where $d=n-1=3$. In the first case the repair time is proportional to $\mathsf{M}/4$ if the helper nodes are nodes 2 and 3. On the other hand if $d=3$, then even though we need to download only $\mathsf{M}/4$ symbols from each helper node, the bottleneck link is the link from the fourth node and thus the repair time is proportional to $\mathsf{M}/(4h)$. If $h>1$, it is beneficial to use $d=3$ while if $h<1$, it is beneficial to use $d=2$. This example illustrates the situations where the network topology may be unknown at encoding time, which may be changing due to link failures, upgrades, or, a network composed of mobile storage nodes.  Clearly there are benefits if  the value of $d$ and the set of helper nodes can be chosen at the time of repair.

Motivated by this observation, we assume a distributed storage system with probing, by which the number of nodes to access for repair can be determined before the repair process starts. Using an active probe, we obtain the bandwidths between any pair of nodes which can be used to decide the number of nodes to access in order to repair a failed node. Thus, in practice, the value of $d$ is not fixed and hence the code design should work for multiple values of parameter $d$.  We will consider bandwidths between multiple nodes for choosing the value of $d$ in Section V.

\subsection{Opportunistic Repair in Distributed Storage}

Assume that there are $n$ nodes and the total file that we need to store is of size \m, which can be reconstructed from $k$ nodes. The repair bandwidth is a function of how many nodes are used for repair. In this setting, we consider what happens for the region formed by the bandwidth to repair from multiple valued of nodes accessed, $d$. The code structure is assumed to remain the same and should allow for repair with varying values of $d\in D = \{d_1, d_2, \cdots \}$ such that each $d_i\ge k$. We denote the storage capacity at each node by $\alpha$ and the bandwidth from each node as $\beta_d$ when the content is accessed from $d$ nodes.

Thus, the distributed storage system with opportunistic repair works for different values of $d$. An opportunistic code can take advantage of varying number of failed nodes. For example, in the $n=4$, $k=2$ case in Figure \ref{fig:code_place}, we note that the system can be repaired from $d=2$ as well as $d=3$ nodes. So, this code design works for all $k \le d \le n-1$. In this paper, we will find the parameters the codes have to satisfy so that they work for multiple values of $d$. It was noted in \cite{Cadambe:11} that at the MSR point, there is no loss asymptotically for opportunistic repair in the sense that at $\alpha_{MSR}$, the optimal values $\beta_d$ for the MSR point corresponding to $d$ are simultaneously achievable for all $d\ge k$. This result was further extended in \cite{adaptive} for adaptive repair at the MSR point where the failed nodes perform a coordinated repair. In this paper, however, we do not consider any coordination among the failed nodes. Furthermore, we will consider the complete tradeoff region formed by the storage capacity and the repair bandwidths for different values of $d$.


\section{Results on Opportunistic Distributed Storage System}

In this section, we present the main theoretical results on opportunistic repair distributed storage systems.

As in storage systems with a fixed number of repair helper nodes, in opportunistic-repair distributed storage systems, there is also a fundamental tradeoff between the share size and the repair bandwidths. For functional repair, this tradeoff can be completely characterized as given in the following theorem.

\begin{theorem}
\label{thoerem:main}
If the value of $\alpha$ and $\beta_{d_j}$ for all $d_j\in D$ satisfies
\begin{equation}
\sum_{i=0}^{k-1}\min(\alpha, \min_{d_j \in D}(d_j -i)\beta_{d_j}) \ge \mathsf{M}, \label{main}
\end{equation}
there exist linear network codes that achieve opportunistic distributed storage system with the storage per node given by $\alpha$ and the repair bandwidth from $d_j$ nodes given by $d_j \beta_{d_j}$ for any $d_j \in D$. Further, if the above condition is not satisfied, it is information theoretically impossible to
achieve opportunistic distributed storage system with the above properties.
\end{theorem}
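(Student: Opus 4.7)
The plan is to extend the information flow graph (IFG) analysis of \cite{Dimakis:10} to the opportunistic setting, where the helper count at each repair is drawn from the set $D$ rather than fixed. The proof naturally splits into a converse (the inequality \eqref{main} is necessary) and an achievability construction via network coding (the inequality is sufficient).

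For the converse, I would first construct an IFG with a source $s$, $n$ initial storage nodes each modelled as an in/out vertex pair joined by an edge of capacity $\alpha$, and a data collector (DC) that attaches with infinite-capacity edges to any $k$ current storage nodes. A repair event that uses $d_j$ helpers introduces a new in/out pair whose in-vertex receives edges of capacity $\beta_{d_j}$ from the out-vertices of the chosen helpers. The key step is to exhibit an adversarial repair schedule that drives the $s$-to-DC min-cut down to the right-hand side of \eqref{main}. Concretely, one arranges a long sequence of failures and repairs so that the $k$ nodes eventually accessed by the DC can be ordered such that, when the $i$-th DC node was last repaired, at most $i$ of its $d_{j_i}$ helpers lie among the DC nodes processed earlier, leaving exactly $d_{j_i}-i$ incoming helper edges to cut from outside; the adversary is free to pick $j_i \in D$ independently at each repair. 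Taking, for each $i$, whichever is smaller between cutting the storage edge (cost $\alpha$) and cutting these in-edges (cost $(d_{j_i}-i)\beta_{d_{j_i}}$), and minimizing over $j_i$, yields a cut of value $\sum_{i=0}^{k-1}\min\bigl(\alpha,\min_{d_j\in D}(d_j-i)\beta_{d_j}\bigr)$. Since the max-flow from $s$ to any valid DC must be at least $\mathsf{M}$ and is upper bounded by this min-cut, the necessity of \eqref{main} follows.

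For achievability, I would consider the IFG obtained by enumerating every finite repair sequence of length up to some horizon $T$ in which each repair uses an arbitrary $d_j \in D$, together with every DC vertex that can arise after any prefix. By the converse calculation run in reverse, the $s$-to-DC min-cut in this finite graph is at least $\mathsf{M}$ for every DC whenever \eqref{main} holds. Invoking the multicast theorem of Ahlswede-Cai-Li-Yeung \cite{Yeung:00}, there exists a linear network code over a sufficiently large finite field that simultaneously delivers rate $\mathsf{M}$ to every DC. Such a code therefore supports, in a single static placement with per-node storage $\alpha$, any adaptively chosen sequence of $d_j$-values with corresponding per-helper download $\beta_{d_j}$. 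A standard truncation-and-limit argument, as used for the fixed-$d$ case in \cite{Dimakis:10}, then removes the dependence on $T$.

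The main obstacle is the simultaneous achievability step: one must confirm that a single linear code, with fixed $\alpha$ bits per node, can serve every $d_j \in D$ on demand and still remain a valid multicast solution against all future repair sequences and DCs. The verification reduces to showing that no matter which $d_j$ is exercised at a given repair, the min-cut condition continues to hold for every downstream DC, which is essentially the converse calculation run as an invariant. A secondary subtlety is the unboundedness of the space of repair sequences; the truncation-plus-limit argument that is standard in the regenerating-code literature should handle this cleanly, so I do not anticipate any genuinely new obstacle beyond those already met in \cite{Dimakis:10}.
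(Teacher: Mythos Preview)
Your proposal is correct and follows essentially the same route as the paper: both reduce to the information-flow-graph framework of \cite{Dimakis:10}, establish that the min-cut from source to any data collector in any admissible IFG is at least $\sum_{i=0}^{k-1}\min(\alpha,\min_{d_j\in D}(d_j-i)\beta_{d_j})$ (via the topological-ordering argument), exhibit a specific chain-type repair sequence with per-step helper count $e_i=\arg\min_{d_j\in D}(d_j-i)\beta_{d_j}$ that attains this value exactly, and then invoke the multicast theorem of \cite{Yeung:00} for achievability. The paper's own write-up is in fact terser than yours---it states the key lemma, gives the explicit $G^*$ construction, and defers the universal min-cut lower bound and the infinite-horizon issue entirely to \cite{Dimakis:10}.
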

\begin{proof}
The proof follows by an extension of the result in \cite{Dimakis:10}. The details can be found in Appendix \ref{apdx_inner}.
\end{proof}

An important question of practical interest is whether by taking the opportunistic approach, a loss of storage-repair efficiency is necessarily incurred.  By leveraging Theorem \ref{thoerem:main}, we can provide an answer to this question, as given in the next theorem where  $D=\{d_1, \cdots d_l\}$ for $k\le d_l < \cdots < d_1 <n$. 

\begin{theorem}
\label{theorem:noloss}
For a given $n$, $k$, $\mathsf{M}$, and $\alpha\ge \frac{\mathsf{M}}{k}$, $(\alpha, \beta_{d_1}^*(\alpha),  \cdots, \beta_{d_l}^*(\alpha))$ satisfy \eqref{main} for $|D|>1$ if and only if either $k=1$ or $\alpha \le \alpha_o(k,d_1,M)\triangleq\frac{\mathsf{M}(d_1-k+2)}{k(d_1-k+2)-1}$.
\end{theorem}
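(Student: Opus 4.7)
The plan is to exploit the piecewise-linear structure of the single-$d$ optimal $\beta_d^*(\alpha)$ given by (\ref{betastar}) and check condition (\ref{main}) directly. I would first characterize the shape of $\beta_d^*(\alpha)$: the segment of the single-$d$ tradeoff curve closest to the MSR point corresponds to the regime in which only the $i=k-1$ term of (\ref{betastar}) is capped by $(d-i)\beta$ while the remaining terms are capped by $\alpha$, giving $\beta_d^*(\alpha) = (\mathsf{M}-(k-1)\alpha)/(d-k+1)$; checking the validity condition $(d-k+2)\beta > \alpha$ shows this formula holds exactly on $\alpha \in [\mathsf{M}/k,\, \alpha_o(k,d,\mathsf{M})]$. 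A short derivative calculation then shows $\alpha_o(k,d,\mathsf{M})$ is strictly decreasing in $d$, so $\alpha \le \alpha_o(k,d_1,\mathsf{M})$ implies $\alpha \le \alpha_o(k,d_j,\mathsf{M})$ for every $d_j \in D$.

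For the sufficiency direction, I would assume $k \ge 2$ and $\alpha \le \alpha_o(k,d_1,\mathsf{M})$. Then every $\beta_{d_j}^*(\alpha) = (\mathsf{M}-(k-1)\alpha)/(d_j-k+1)$, so $(d_j-i)\beta_{d_j}^*(\alpha) = (\mathsf{M}-(k-1)\alpha)(d_j-i)/(d_j-k+1)$. Because $(d_j-i)/(d_j-k+1) = 1 + (k-1-i)/(d_j-k+1)$ is decreasing in $d_j$ for $i < k-1$ and equals $1$ for $i = k-1$, the inner minimum $\min_{d_j \in D}(d_j-i)\beta_{d_j}^*(\alpha)$ equals $(d_1-i)\beta_{d_1}^*(\alpha)$ for $i < k-1$ and $\mathsf{M}-(k-1)\alpha$ for $i = k-1$. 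First-segment validity at $d_1$ gives $(d_1-i)\beta_{d_1}^*(\alpha) \ge \alpha$ for $i < k-1$, and $\alpha \ge \mathsf{M}/k$ gives $\mathsf{M}-(k-1)\alpha \le \alpha$, so the outer $\min(\alpha,\cdot)$ collapses the sum in (\ref{main}) to $(k-1)\alpha + (\mathsf{M}-(k-1)\alpha) = \mathsf{M}$. The $k=1$ case is immediate since (\ref{main}) reduces to $\min(\alpha,\, \min_{d_j} d_j \beta_{d_j}^*) \ge \mathsf{M}$, which holds at the single-$d$ optima $\beta_{d_j}^* = \mathsf{M}/d_j$.

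For the necessity direction I would assume $k \ge 2$, $|D| > 1$, and $\alpha > \alpha_o(k,d_1,\mathsf{M})$, pick any $d_2 \in D$ with $d_2 < d_1$, and show the combined sum in (\ref{main}) is strictly less than $\mathsf{M}$. For each $i < k-1$ the combined term is at most the single-$d_1$ term (take $d_j = d_1$ inside the inner minimum), so since the single-$d_1$ sum equals $\mathsf{M}$ it suffices to show strict decrease at $i = k-1$. At that index the single-$d_1$ term equals $(d_1-k+1)\beta_{d_1}^*(\alpha)$, because $(d-k+1)\beta_d^*(\alpha) \le \alpha$ throughout the single-$d$ tradeoff curve, and the combined term is at most $(d_2-k+1)\beta_{d_2}^*(\alpha)$. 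Thus it remains to establish the sub-lemma $(d_2-k+1)\beta_{d_2}^*(\alpha) < (d_1-k+1)\beta_{d_1}^*(\alpha)$.

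I would prove the sub-lemma by treating $\phi(d) := (d-k+1)\beta_d^*(\alpha)$ as a function of $d$ for fixed $\alpha$: on a segment with critical index $j_d = j$, $\phi(d) = 2(d-k+1)(\mathsf{M}-j\alpha)/[(k-j)(2d-j-k+1)]$, which a routine derivative check shows is constant in $d$ for $j = k-1$ and strictly increasing in $d$ for $j \le k-2$, and $\phi$ is continuous across segment boundaries. Because $\alpha > \alpha_o(k,d_1,\mathsf{M})$ pushes $d_1$ strictly past its first-segment threshold and the corner at $\alpha_o(k,d_1,\mathsf{M})$ is genuine (the slope changes from $-(k-1)/(d_1-k+1)$ to the less-negative $-(k-2)/(2d_1-2k+3)$), strict convexity of $\beta_{d_1}^*(\alpha)$ at that corner gives $(d_1-k+1)\beta_{d_1}^*(\alpha) > \mathsf{M}-(k-1)\alpha$. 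This already settles the case $\alpha \le \alpha_o(k,d_2,\mathsf{M})$, in which $\phi(d_2) = \mathsf{M}-(k-1)\alpha$, and the remaining case $\alpha > \alpha_o(k,d_2,\mathsf{M})$ follows because both $d_1$ and $d_2$ are then past the first-segment threshold and the strict monotonicity of $\phi$ in $d$ forces $\phi(d_2) < \phi(d_1)$. The main obstacle is this sub-lemma: the within-segment derivative is a clean calculation, but carefully verifying continuity and uniform non-decrease of $\phi$ as both $\alpha$ and $d$ cross the piecewise-linear corners is where most of the bookkeeping lies.
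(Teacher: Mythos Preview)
Your approach is correct but takes a genuinely different route from the paper. You work directly with the piecewise-linear structure of each $\beta_{d_j}^*(\alpha)$: the sufficiency direction is very clean this way, and for necessity you reduce to the monotonicity sub-lemma for $\phi(d)=(d-k+1)\beta_d^*(\alpha)$, which is correct but, as you acknowledge, requires tracking the segment index as $d$ varies.

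The paper instead fixes a pair $\{e_1,e_2\}\subseteq D$ with $e_1>e_2$ and introduces the auxiliary quantity $\widetilde{\beta_{e_2}}(\alpha)$, the smallest $\beta_{e_2}$ for which $(\alpha,\beta_{e_1}^*(\alpha),\beta_{e_2})$ satisfies (\ref{main}). A term-by-term comparison with the single-$e_1$ constraint (both sums equal $\mathsf{M}$, each combined term is $\le$ its single-$e_1$ counterpart, hence all are equal) forces $(e_2-i)\widetilde{\beta_{e_2}}\ge\min(\alpha,(e_1-i)\beta_{e_1}^*)$ for every $i$, and optimizing over $i$ yields the closed form $\widetilde{\beta_{e_2}}=\tfrac{e_1-k+1}{e_2-k+1}\beta_{e_1}^*(\alpha)$. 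Whether $\widetilde{\beta_{e_2}}=\beta_{e_2}^*$ then reduces to the single inequality $\alpha\le(e_1-k+2)\beta_{e_1}^*(\alpha)$, i.e.\ the first-segment condition for $d_1=e_1$; no information about the higher segments of $\beta_{e_2}^*$ is ever needed. This sidesteps your $\phi(d)$ sub-lemma entirely and, as a bonus, produces Theorem~3 (the exact penalty $\tfrac{d_1-k+1}{d_j-k+1}\beta_{d_1}^*$) for free.

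If you prefer to keep your route, the sub-lemma bookkeeping can be eliminated: write the defining equation as $\sum_{i=0}^{k-1}\min\bigl(\alpha,\tfrac{d-i}{d-k+1}\phi\bigr)=\mathsf{M}$ and observe that for fixed $\phi$ each coefficient $\tfrac{d-i}{d-k+1}=1+\tfrac{k-1-i}{d-k+1}$ is non-increasing in $d$ (strictly for $i<k-1$), so the left side is non-increasing in $d$ and strictly decreasing once some $i<k-1$ term is below $\alpha$; hence $\phi(d)$ is non-decreasing, and strictly increasing past the first segment, with no segment tracking required.
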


The proof of this theorem is given in the appendix. This theorem essentially states that below the critical threshold $\alpha_o$ of the storage share size, there is no loss of optimality by imposing the opportunistic repair requirement, while above this threshold, such a loss is indeed necessary. A special case of practical relevance is given as a corollary next, which essentially states that there is no loss by requiring the MSR codes to have the opportunistic repair property.



\begin{corollary} MSR points for all values of $d$ are simultaneously achievable, i.e., $(\frac{\mathsf{M}}{k}, \beta_k^*(\frac{\mathsf{M}}{k}), \beta_{k+1}^*(\frac{\mathsf{M}}{k}), \cdots, \beta_{n-1}^*(\frac{\mathsf{M}}{k}))$ satisfy \eqref{main}.
\end{corollary}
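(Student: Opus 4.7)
The plan is to obtain the corollary as a direct specialization of Theorem \ref{theorem:noloss}. I would take $D = \{k, k+1, \dots, n-1\}$, so that $d_1 = n-1$ and $d_l = k$, and set $\alpha = \mathsf{M}/k$, which is the MSR storage-per-node value. This $\alpha$ lies at the left boundary of the admissible range $\alpha \ge \mathsf{M}/k$ considered in Theorem \ref{theorem:noloss}, so the theorem applies.

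The only thing to verify is that the MSR $\alpha$ falls at or below the critical threshold $\alpha_o(k,d_1,\mathsf{M}) = \frac{\mathsf{M}(d_1-k+2)}{k(d_1-k+2)-1}$. Since $d_1 = n-1 \ge k$, the quantity $d_1-k+2 \ge 2$ is a positive integer, so $k(d_1-k+2)-1 > 0$. Cross-multiplying (a one-line calculation), the inequality $\frac{\mathsf{M}}{k} \le \frac{\mathsf{M}(d_1-k+2)}{k(d_1-k+2)-1}$ reduces to $k(d_1-k+2)-1 \le k(d_1-k+2)$, which holds trivially. In the degenerate case $k=1$, Theorem \ref{theorem:noloss} already gives the conclusion without any threshold check. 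Either way, the hypotheses of Theorem \ref{theorem:noloss} are met.

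Invoking the theorem then yields that $(\mathsf{M}/k, \beta_{d_1}^*(\mathsf{M}/k), \dots, \beta_{d_l}^*(\mathsf{M}/k))$ satisfies the opportunistic feasibility inequality \eqref{main}, which is exactly the statement of the corollary. Since the argument is a one-step reduction to the already-proved threshold theorem, there is no substantive obstacle; the only mild subtlety is to be explicit that the corner case $k=1$ is covered by the first clause of Theorem \ref{theorem:noloss}'s ``either/or'' hypothesis, so that one does not need to rely on the threshold bound when $k=1$.
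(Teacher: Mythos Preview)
Your proposal is correct and matches the paper's treatment: the corollary is stated immediately after Theorem~\ref{theorem:noloss} as a direct specialization, with no separate proof given. Your verification that $\alpha=\mathsf{M}/k\le\alpha_o(k,d_1,\mathsf{M})$ (and the $k=1$ corner case) is exactly the one-line check implicit in the paper.
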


This particular result has been previously observed in \cite{Cadambe:11}. In fact, even for the more stringent exact-repair case where the failed disk needs to be repaired with exact same copy, the same result holds asymptotically, using the class of asymptotically optimal codes constructed in \cite{Cadambe:11}.
%


The next theorem deals with the case when $(\alpha,\beta_{d_1})$ is operating on the optimal (non-opportunistic-repair) storage-repair-bandwidth tradeoff curve, when $\alpha$ is larger than the  given threshold $\alpha_o$. In this case, a loss of repair bandwidth is necessary for all the other values of $d=d_2,d_3,\ldots,d_l$, and the following theorem characterizes this loss precisely.

\begin{theorem}
For a given $n$, $k$, $\mathsf{M}$, and $\alpha\ge \frac{\mathsf{M}}{k}$, $(\alpha, \beta_{d_1}^*(\alpha), \frac{d_1 -k+1}{d_2-k+1} \beta_{d_1}^*(\alpha) \cdots, \frac{d_1 -k+1}{d_l-k+1} \beta_{d_1}^*(\alpha))$ satisfies \eqref{main}. Further, given that  $\beta_{d_1}=\beta_{d_1}^*(\alpha)$ in \eqref{main}, the above point has the smallest value of possible $\beta$'s for the remaining values $d_i$, $i>1$.
\end{theorem}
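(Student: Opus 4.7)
The plan is to prove the feasibility claim and the optimality claim separately, using Theorem~\ref{thoerem:main} and the defining property of $\beta_{d_1}^*(\alpha)$ as the only inputs.

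For feasibility, I would substitute the candidate values $\beta_{d_j}=\frac{d_1-k+1}{d_j-k+1}\beta_{d_1}^*(\alpha)$ into \eqref{main} and evaluate the inner minimum. The key step is the factorization $(d_j-i)\beta_{d_j}=\bigl(1+\tfrac{k-1-i}{d_j-k+1}\bigr)(d_1-k+1)\beta_{d_1}^*(\alpha)$, from which, since $k-1-i\ge 0$ for every $i\in\{0,\ldots,k-1\}$, the bracketed factor is non-increasing in $d_j$. The minimum over $d_j\in D$ is therefore attained at the largest element $d_1$ and reduces to $(d_1-i)\beta_{d_1}^*(\alpha)$, so the left side of \eqref{main} collapses to $\sum_{i=0}^{k-1}\min\!\bigl(\alpha,(d_1-i)\beta_{d_1}^*(\alpha)\bigr)$, which is at least $\mathsf{M}$ by the definition of $\beta_{d_1}^*(\alpha)$.

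For the converse, I would fix $\beta_{d_1}=\beta_{d_1}^*(\alpha)$ and consider any feasible tuple $(\alpha,\beta_{d_1}^*(\alpha),\beta_{d_2},\ldots,\beta_{d_l})$. Because $d_1\in D$, the quantity $(d_1-i)\beta_{d_1}^*(\alpha)$ is always a candidate in the inner minimization, yielding the pointwise bound $\min_{d_j\in D}(d_j-i)\beta_{d_j}\le (d_1-i)\beta_{d_1}^*(\alpha)$. Summing over $i$, the left side of \eqref{main} is at most $\sum_{i=0}^{k-1}\min\!\bigl(\alpha,(d_1-i)\beta_{d_1}^*(\alpha)\bigr)=\mathsf{M}$, where the equality is the tightness of $\beta_{d_1}^*(\alpha)$ at the optimal point. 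Feasibility then forces equality term by term; specializing to $i=k-1$ and using that $(d_1-k+1)\beta_{d_1}^*(\alpha)\le \alpha$ throughout $\alpha\ge \mathsf{M}/k$, the outer minimum becomes transparent and the term-by-term equality simplifies to $\min_{d_j\in D}(d_j-k+1)\beta_{d_j}=(d_1-k+1)\beta_{d_1}^*(\alpha)$, which immediately yields $\beta_{d_j}\ge \frac{d_1-k+1}{d_j-k+1}\beta_{d_1}^*(\alpha)$ for every $j>1$.

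I anticipate the main conceptual difficulty to be recognizing that the single term $i=k-1$ simultaneously imposes the binding constraint on each $\beta_{d_j}$, while the feasibility analysis certifies that this single choice is enough to cover the remaining indices $i<k-1$---which is precisely what the monotonicity in $d_j$ provides. A minor technical point is establishing $(d_1-k+1)\beta_{d_1}^*(\alpha)\le \alpha$ for every $\alpha\ge \mathsf{M}/k$, which follows because $\beta_{d_1}^*(\alpha)$ is non-increasing in $\alpha$ and the relation holds with equality at the MSR point $\alpha=\mathsf{M}/k$.
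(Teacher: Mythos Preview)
Your proof is correct and follows essentially the same line as the paper. The paper does not give a standalone proof of this theorem; the argument is contained in the proof of Theorem~\ref{theorem:noloss} (Appendix~B), where the formula $\widetilde{\beta_{e_2}}(\alpha)=\frac{e_1-k+1}{e_2-k+1}\beta_{e_1}^*(\alpha)$ is derived for $|D|=2$ via the same term-by-term comparison and monotonicity-in-$i$ observation you use. Your presentation is in fact slightly cleaner: you separate feasibility from optimality, handle general $|D|$ directly rather than reducing to pairs, and your monotonicity argument in $d_j$ for the feasibility half is the dual of the paper's monotonicity in $i$ for the optimality half. Your justification of $(d_1-k+1)\beta_{d_1}^*(\alpha)\le\alpha$ via monotonicity of $\beta_{d_1}^*$ is equivalent to the paper's observation that the reverse inequality would contradict optimality of $\beta_{d_1}^*(\alpha)$.
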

\begin{figure}[tb]
  \centering
      \includegraphics[trim = 1.15in 3in 1.5in 3.2in, clip, width=0.45\textwidth]{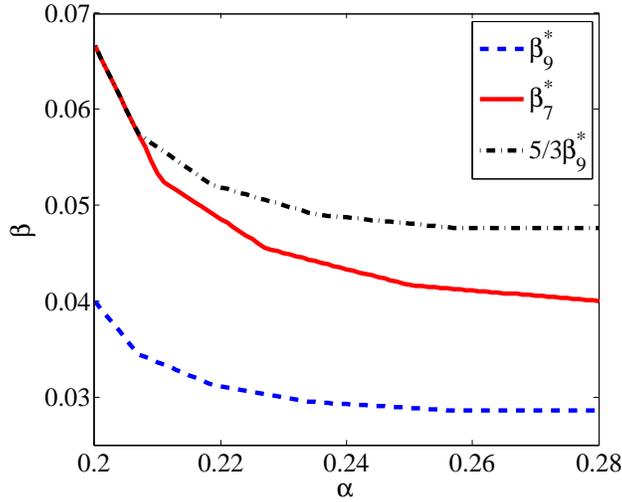}
  \caption{Loss with opportunistic repair for $n=10$, $k=5$, and $\mathsf{M}=1$. We consider the set of possible repair values as $\{7,9\}$.}\label{opp_example}
\end{figure}

As an example, we consider  $n=10$ and $k=5$ in Figure \ref{opp_example}. Note that without opportunistic repair, the tradeoffs for $d=7$ and $d=9$ can be independently achieved. However, both these curves are not simultaneously achievable. Till the first linear segment of the curve with largest $d$ ($d=9$ in this case), the values of $\beta$ on the two tradeoff curves are simultaneously achievable. After that, there is a loss. Assuming that we choose to be on the tradeoff for $d=9$, the black dash-dotted curve in Figure \ref{opp_example} represents the best possible tradeoff for $d=7$ and thus shows an increase with respect to the optimal code that works for only $d=7$.

\section{Mean Time to Data Loss}

In this section, we will consider the improvement in mean time to data loss (MTTDL)  due to opportunistic repair. There exist two models widely studied in the literature, which are used to evaluate the impact of opportunistic repair on these systems. We shall first provide necessary background on the models, and then discuss how these models fit in our proposed approach, and evaluate the performances.

\subsection{Chen\rq{}s Model and  Angus' model}

Both Chen\rq{}s model and Angus\rq{} model address systems with a total of $n$ components  (e.g., hard drives or nodes), which can withstand any $n-k$ component failures. The components may fail and are subsequently repaired at a certain rate, which can be modeled by a Markov Chain. The difference between the two models are in the rate of repairs.

The rate at which individual components fail is usually denoted by $\lambda$ while the rate at which those components are repaired is $\mu$. Alternatively, these
two rates might instead be expressed as times: Mean-Time-To-Failure (MTTF) and Mean-Time-To-Repair (MTTR) respectively. When $\lambda$ and $\mu$ are constant over
time, i.e.  exponentially distributed, $MTTF = 1/\lambda$
and $MTTR = 1/\mu$.

\begin{figure}[tb]
  \centering
      \includegraphics[trim = .6in 3.8in 2.8in .9in, clip, width=0.5\textwidth]{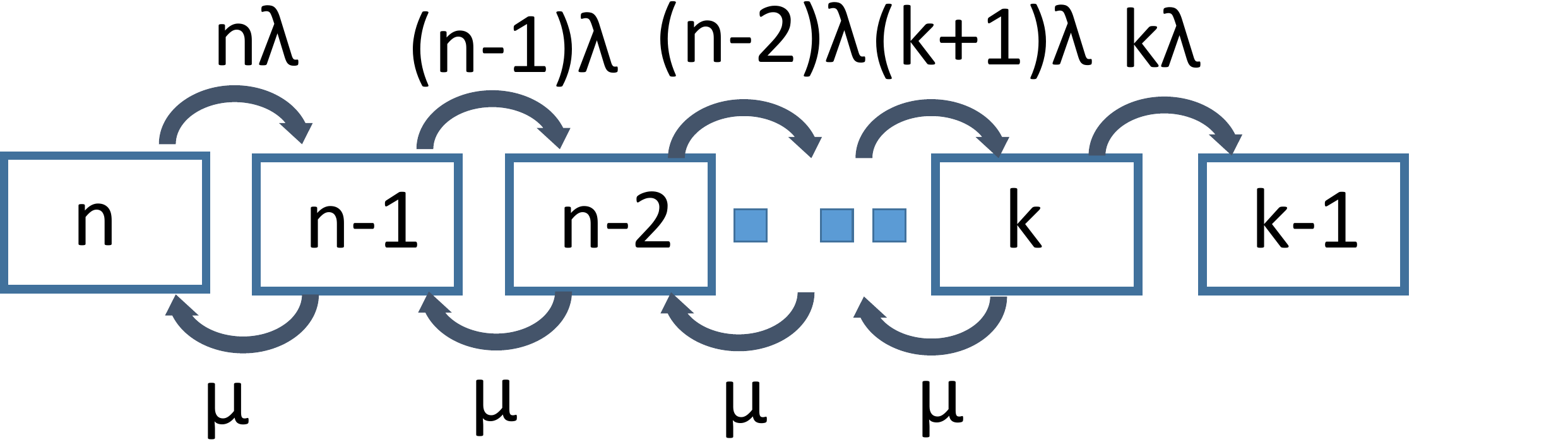}
  \caption{The state transition diagram for Chen's model.}\label{chen_org}
\end{figure}

\begin{figure}[tb]
  \centering
      \includegraphics[trim = .6in 3.8in 2.8in .9in, clip, width=0.5\textwidth]{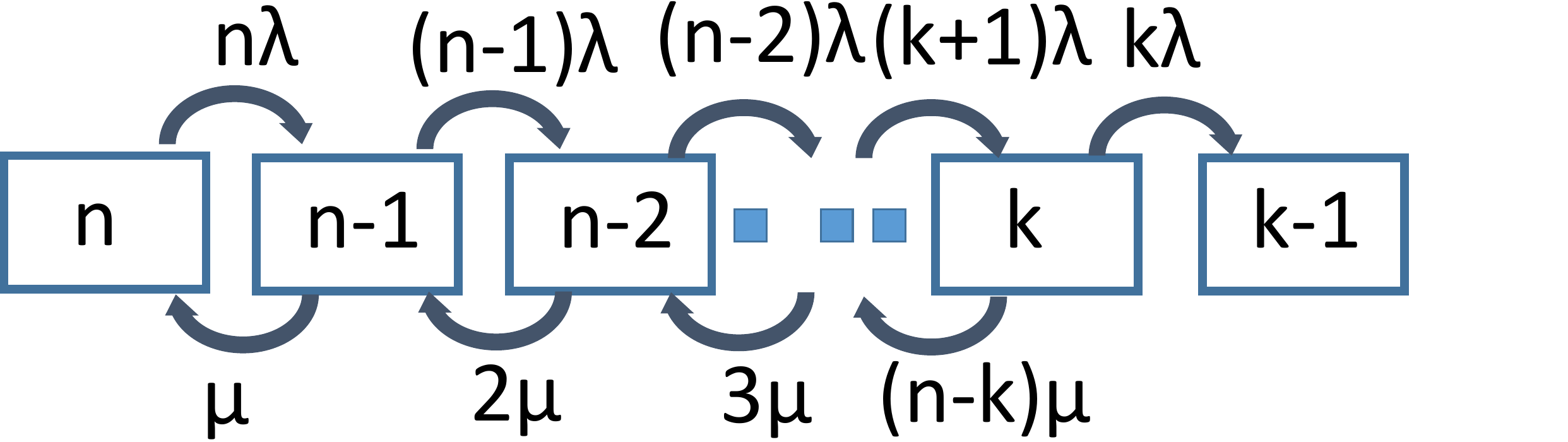}
  \caption{The state transition diagram for Angus' model.}\label{angus_org}
\end{figure}

\begin{figure}[tb]
  \centering
      \includegraphics[trim = .4in 2.7in 2.8in .9in, clip, width=0.5\textwidth]{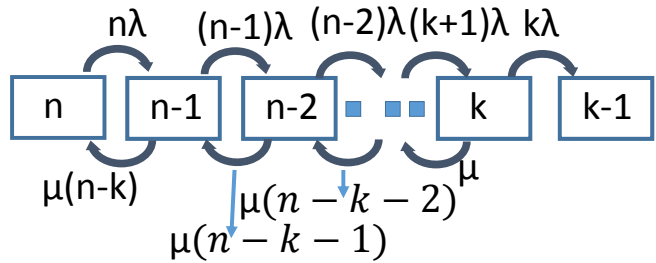}
  \caption{The state transition diagram for Chen's model with opportunistic repair.}\label{chen_mod}
\end{figure}

The first model is Chen's model \cite{Chen}. Chen et al. presented models for estimating the MTTDL for various RAID configurations, including RAID 0 (no parity), RAID 5 (single parity) and RAID 6 (dual parity). In this model,  failures occur at a rate equal to the number of operational
devices times the device failure rate, and repairs occur at the device
repair rate regardless of the number of failed devices. This model  for instance holds when one failed device is repaired at a time even when multiple devices fail. Thus, this model can also be called ``serial repair" where different failed devices are repaired one after the other. The repair and failure rates for the Chen's model are shown in Figure \ref{chen_org}.

The second well-accepted model in the literature is Angus' model \cite{Angus}. Unlike Chen's model, Angus' model assumes that there are unlimited repairmen. This means that whether
1 device or 100 fail simultaneously, each failed device
will be repaired at a constant rate. Thus, this model can be called  a ``parallel repair" model, where all the failed devices can be repaired simultaneously.  The state transition diagram for Angus' model is described in Figure \ref{angus_org}.

\subsection{MTTDL with Opportunistic Repair}

The \lq\lq{}repairman\rq\rq{} in Chen\rq{}s and Angus\rq{} models essentially captures the bottlenecks in the repair process, which can be either communication resources or computation resources. In the proposed framework, the bottleneck is mainly in the former. More precisely, the mean time to repair is inversely proportional to the bandwidth of the links that are used to repair the failed node. If all the incoming bandwidth of the communications links increase by factor $c>1$, mean time to repair goes down by factor $c$. For the same available communication links, if the repair traffic is reduced, the mean time to repair is reduced accordingly.

When opportunistic repair at the MSR point is used, the repair time is reduced if there are more than $k$ nodes in the system. Let us first consider Chen\rq{}s model, and using the result in the previous section, the parameters for state transition are as in Figure \ref{chen_mod}. Here the repair time is inversely proportional to the bandwidth needed from each of the remaining active device to repair the failed device. As an example, consider the transition from $n-1$ active nodes to $n$ active nodes. In this case, the repair bandwidth from each of the remaining $n-1$ nodes is a factor $n-k$ smaller than the repair bandwidth from each node when only $k$ nodes are used for repair. Thus, the effective data needed from a disk reduces by a factor $n-k$ and thus can be transferred in a factor of $n-k$ less time. Thus, the mean time to repair decreases by a factor $n-k$ thus giving the transition rate of $\mu(n-k)$.

Next, we will characterize the MTTDL for the Chen's model with opportunistic repair.

\begin{theorem} The MTTDL for Chen's model with opportunistic repair is given as
\begin{eqnarray}
MTDL_{\text{Chen, Opp}} &=& \sum_{l=0}^{n-k}\frac{(n-k-l)!}{(n-l)!}\sum_{i=0}^{n-k-l}\mu^i \nonumber\\
&&\lambda^{-(i+1)}\frac{(n-l-i-1)!}{(n-l-k-i)!}
\end{eqnarray}\label{thm_MTTDL}
\end{theorem}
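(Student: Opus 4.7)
The plan is to treat the opportunistic Chen model as a continuous-time birth–death chain on the states $s=0,1,\ldots,n-k+1$, where $s$ counts the number of currently failed nodes and state $n-k+1$ is the absorbing data-loss state. Reading the rates off Figure~\ref{chen_mod} (or equivalently, from the earlier MSR analysis, which showed that repairing from $d$ helpers rescales the per-node bandwidth by a factor $1/(d-k+1)$ relative to the $d=k$ baseline), the failure rate out of state $s$ is $\alpha_s=(n-s)\lambda$, and the repair rate is $\beta_s=(n-s-k+1)\mu$ for $1\le s\le n-k$. The MTTDL is then the mean absorption time starting from $s=0$.

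The first step is to decompose $\mathrm{MTTDL}=\sum_{j=0}^{n-k}\tau_j$, where $\tau_j$ is the expected time to first reach state $j+1$ from state $j$; this is legitimate in a birth–death chain because every sample path reaching the absorbing state must traverse each intermediate state in order, and linearity of expectation applies. A standard first-step analysis then yields the one-term recursion
\begin{equation*}
\tau_0=\frac{1}{\alpha_0},\qquad \tau_j=\frac{1}{\alpha_j}+\frac{\beta_j}{\alpha_j}\,\tau_{j-1}\quad(j\ge 1),
\end{equation*}
which unrolls to the explicit form
\begin{equation*}
\tau_j=\sum_{i=0}^{j}\frac{\prod_{m=i+1}^{j}\beta_m}{\prod_{m=i}^{j}\alpha_m}.
\end{equation*}

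The remaining work is algebraic. Substituting the explicit rates, the inner products collapse into ratios of factorials, using $\prod_{m=i+1}^{j}(n-k+1-m)=(n-k-i)!/(n-k-j)!$ and $\prod_{m=i}^{j}(n-m)=(n-i)!/(n-j-1)!$. Summing the resulting expression for $\tau_j$ over $j=0,\ldots,n-k$ and then applying the outer change of variable $l=n-k-j$, together with a renaming of the inner index so that the exponent $i$ of $\mu$ becomes the inner summation variable, puts the double sum into the exact factorial form asserted by the theorem. No nontrivial combinatorial identity is needed beyond this re-indexing; the main obstacle is simply the factorial bookkeeping, which is error-prone. To guard against off-by-one mistakes, I would verify the final closed form on the small case $n=3,k=1$, where both the recursion and the claimed formula should give $11/(6\lambda)+5\mu/(6\lambda^2)+\mu^2/(3\lambda^3)$.
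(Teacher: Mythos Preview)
Your approach is correct and genuinely different from the paper's. The paper sets up the Kolmogorov forward equations for the state probabilities, packages them as $\overline{P}'(t)=A\overline{P}(t)$ with a tri-diagonal generator $A$, and then computes $\mathrm{MTTDL}=\lim_{s\to 0}[1\cdots 1\,0](sI-A)^{-1}e_1$ via the Laplace-transform final-value theorem; the determinant and the relevant cofactors of $sI-A$ are evaluated through a separate lemma on tri-diagonal determinants, and the final double sum emerges after ``some manipulations.'' Your route is more elementary: the first-passage decomposition $\mathrm{MTTDL}=\sum_j\tau_j$ with the one-step recursion $\tau_j=\alpha_j^{-1}+(\beta_j/\alpha_j)\tau_{j-1}$ avoids Laplace transforms and matrix inversion entirely, and the factorial products fall out directly. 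One small caution: the re-indexing you sketch (outer substitution $l=n-k-j$ plus an inner rename) is not quite the right bijection; what actually lands on the stated formula is to set $l'=j-i'$ in the inner sum, swap the summation order so $l'$ becomes outer, and then put $i=j-l'$ as the new inner index. This is still pure bookkeeping with no hidden identity, and your $n=3$, $k=1$ sanity check confirms the endpoint, but you should correct the description of the change of variables before writing it up.
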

\begin{proof}
The proof is provided in the Appendix \ref{apdx_MTTDL}.
\end{proof}

We note by the similar proof steps, the MTTDL for the original Chen's model is given as follows.
\begin{equation}
MTDL_{\text{Chen, Orig}} = \sum_{l=0}^{n-k}\frac{1}{(n-l)! }\sum_{i=0}^{n-k-l}\mu^i \lambda^{-(i+1)}(n-l-i-1)!
\end{equation}

We will now consider these expressions in the limit that $\lambda << \mu$. In this regime, the two expressions above are given as follows
\begin{eqnarray}
MTDL_{\text{Chen, Opp}} &=& \frac{(n-k)!(k-1)!}{n!}\frac{\mu^{n-k}}{\lambda^{n-k+1}}\\
MTDL_{\text{Chen, Orig}} &=& \frac{(k-1)!}{n!}\frac{\mu^{n-k}}{\lambda^{n-k+1}}
\end{eqnarray}

Thus, note that MTTDL increases by a factor of $(n-k)!$ with opportunistic repair as compared to that without opportunistic repair.

\begin{figure}[tb]
  \centering
      \includegraphics[trim = .4in 2.7in 2.8in .9in, clip, width=0.5\textwidth]{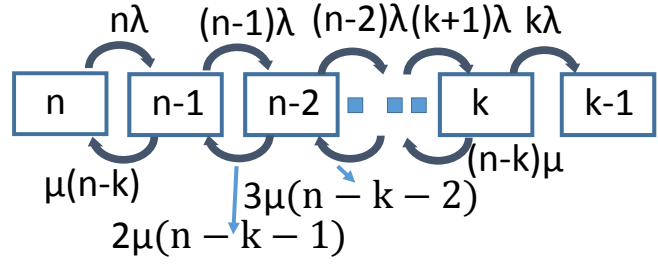}
  \caption{The state transition diagram for Angus' model with Opportunistic Repair.}\label{angus_mod}
\end{figure}

We can also use opportunistic repair for Angus' model and hence save bandwidth when there are more surviving nodes. Using opportunistic repair, the modified state transition is described in Figure \ref{angus_mod}. The MTTDL for Angus' model with opportunistic repair is given as follows.

\begin{theorem}  The MTTDL for Angus' model with opportunistic repair is given as
\begin{eqnarray}
MTDL_{\text{Angus, Opp}} &=& \sum_{l=0}^{n-k}\frac{(n-k-l)!}{(n-l)! }\sum_{i=0}^{n-k-l}\mu^i \nonumber\\
&& \lambda^{-(i+1)}\frac{(n-l-i-1)!}{(n-l-k-i)!}i!
\end{eqnarray}
\end{theorem}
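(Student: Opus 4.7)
The plan is to follow the structure of the proof of Theorem \ref{thm_MTTDL} (Chen's opportunistic model) in Appendix \ref{apdx_MTTDL}, replacing the transition rates by those displayed in Figure \ref{angus_mod}. I would index states by the number of failed nodes $j = 0,1,\ldots,n-k+1$ (with $j = n-k+1$ the absorbing data-loss state), writing the failure rate $\lambda_j = (n-j)\lambda$ and a repair rate $\mu_j$ that combines the opportunistic MSR bandwidth saving (contributing the same $(n-j-k+1)$ factor used for Chen's opportunistic model when $d = n-j$ helpers are available) with the parallel-repair multiplicity coming from Angus' ``unlimited repairmen'' assumption.

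MTTDL is then the expected first-passage time from state $0$ to state $n-k+1$ of the resulting birth-death chain. Letting $h_j$ denote the expected time from $j$ to $j+1$, a standard one-step analysis gives
\begin{equation*}
h_j = \frac{1}{\lambda_j} + \frac{\mu_j}{\lambda_j}\, h_{j-1},\qquad h_0 = \frac{1}{\lambda_0},
\end{equation*}
which unrolls to $h_j = \sum_{i=0}^{j}\lambda_i^{-1}\prod_{s=i+1}^{j}(\mu_s/\lambda_s)$, and the total MTTDL is $\sum_{j=0}^{n-k}h_j$. Substituting the Angus opportunistic rates, the product in each summand collapses into a ratio of falling factorials in $n-s$ and $n-s-k+1$, multiplied by a pure-factorial factor $j!/i!$ coming from the parallel-repair multiplicities. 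That extra $j!/i!$, compared with the Chen opportunistic derivation, is precisely the source of the new $i!$ term appearing in the stated closed form.

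Finally I would swap the order of summation, reindex the outer sum to match the state label $l$ used in the theorem (exactly as done in the Chen opportunistic proof), and collect the double sum into the advertised form $\sum_{l=0}^{n-k}\frac{(n-k-l)!}{(n-l)!}\sum_{i=0}^{n-k-l}\mu^i\lambda^{-(i+1)}\frac{(n-l-i-1)!}{(n-l-k-i)!}i!$. The main obstacle is purely algebraic: keeping track of the falling factorials in $n-l-i-1$ and $n-l-k-i$ through the summation interchange, and confirming that the parallel-repair contribution survives the reindexing as the stated $i!$ factor attached to $\mu^i$ rather than being absorbed into a binomial-like coefficient. No new probabilistic ingredient beyond adapting Appendix \ref{apdx_MTTDL} is needed.
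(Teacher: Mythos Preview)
Your proposal is mathematically sound, but it is \emph{not} the approach taken in the paper, despite your opening claim that you will ``follow the structure of the proof of Theorem~\ref{thm_MTTDL} in Appendix~\ref{apdx_MTTDL}.'' The paper's argument for Chen's model does not use the birth--death first-passage recursion $h_j = 1/\lambda_j + (\mu_j/\lambda_j)h_{j-1}$ at all. Instead it writes the Kolmogorov forward equations in matrix form $\dot{\overline P}=A\overline P$, applies the Laplace final-value theorem to express the MTTDL as $\lim_{s\to 0}[1\cdots 1\,0](sI-A)^{-1}[1\,0\cdots 0]^T$, and then evaluates the determinant and the relevant first-column cofactors of $sI-A$ via the tri-diagonal determinant identity of Lemma~\ref{matrix_det}. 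The paper's ``proof'' of the Angus opportunistic theorem is simply the sentence that these same steps go through with the Angus transition rates substituted; the extra $i!$ emerges because the repair entries $b_r$ in Lemma~\ref{matrix_det} now carry the parallel-repair multiplicity, so the product $b_1\cdots b_{l-i}$ acquires an additional factorial.

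What you actually propose---the hitting-time recursion unrolled and then reindexed---is a genuinely different and more elementary route: it avoids Laplace transforms and cofactor calculations entirely, and it makes the role of the parallel-repair factor transparent as the extra $j!/r!$ appearing in $\prod_{s=r+1}^{j}\mu_s$. The paper's method, by contrast, packages everything into Lemma~\ref{matrix_det} and reads off the answer from a single determinant formula. Both are valid; yours is shorter and needs no auxiliary lemma, while the paper's is more mechanical once Lemma~\ref{matrix_det} is in hand. Just be aware that the ``summation interchange'' you describe does not occur in Appendix~\ref{apdx_MTTDL}; that appendix never writes $h_j$ or sums over first-passage increments, so when you carry out the reindexing you are supplying a new argument, not reproducing the paper's. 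You should also double-check that after your swap the surviving factor is exactly the stated $i!$ and not a ratio like $(l+i)!/l!$: the two agree only in the leading $l=0$ term, so the final bookkeeping step deserves care.
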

Since the proof steps are similar to that in the Chen's model, the proof is omitted. Further,  the MTTDL for the original Angus' model is given as follows.
\begin{eqnarray}
&&MTDL_{\text{Angus, Orig}}\nonumber\\ &=& \sum_{l=0}^{n-k}\frac{1}{(n-l)! }\sum_{i=0}^{n-k-l}\mu^i \lambda^{-(i+1)}(n-l-i-1)! i!
\end{eqnarray}

We will now consider these expressions in the limit that $\lambda << \mu$. In this regime, the two expressions above are given as follows
\begin{eqnarray}
MTDL_{\text{Angus, Opp}} &=&  \frac{(k-1)!}{n!}\frac{\mu^{n-k}}{\lambda^{n-k+1}} ((n-k)!)^2 \\
MTDL_{\text{Angus, Orig}} &=& \frac{(k-1)!}{n!}\frac{\mu^{n-k}}{\lambda^{n-k+1}} (n-k)!
\end{eqnarray}

Thus,  the MTTDL increases by a factor of $(n-k)!$ with opportunistic repair as compared to that without opportunistic repair. Also,  note that the MTTDL loss for Angus' model is $(n-k)!$ higher than that in Chen's model.

\section{Network Simulation Examples}

We consider a network with $n$ distributed nodes, an $(n,k)$ MDS systematic code and consider the case where repair can be performed from any $d$ nodes, $k\le d<n$ nodes. When any node fails, it pings all the other nodes to determine  the bandwidths from each of the active node and then determines the helper set, i.e the nodes that will participate in the repair. Two scenarios are discussed next, the first one of which has deterministic bandwidth between nodes, while the second has random bandwidth between nodes.

\subsection{Deterministic scenario for distributed data centers}

\begin{figure}[htb] 
   \centering
   \includegraphics[trim =.1in 2.7in .1in 1.7in, clip,width=3.3in]{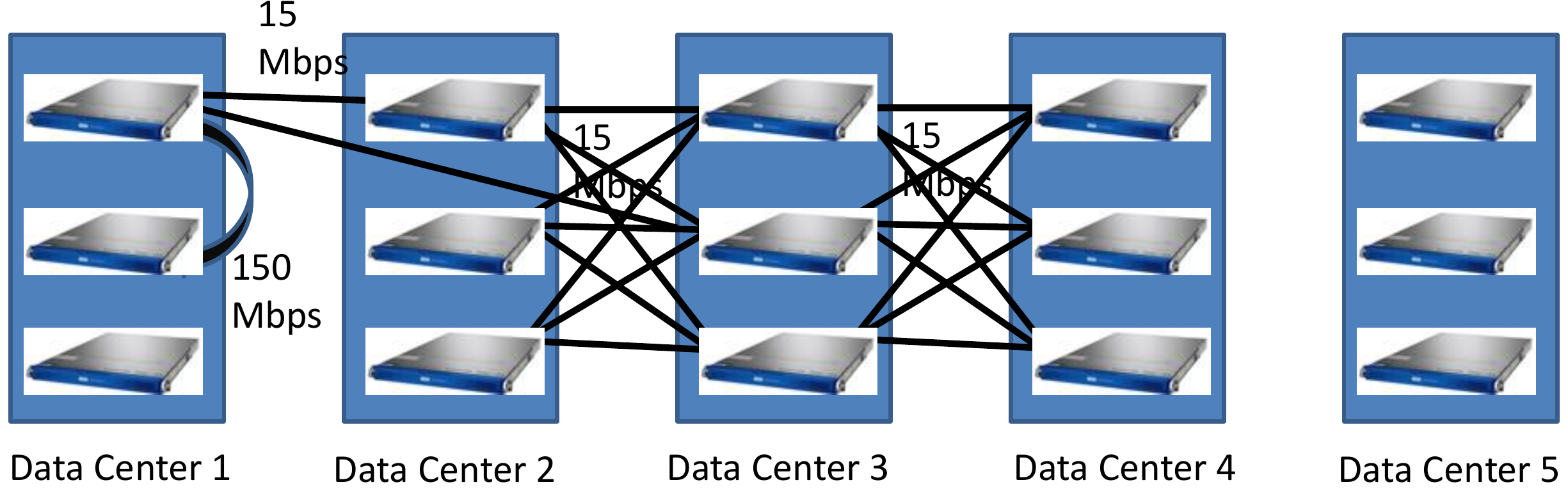}
   \caption{Five data centers, each with three nodes.}
   \label{fig:place_det}
\end{figure}

We consider a scenario where there are five distributed data centers. A (10,15) code is used and  three out of fifteen chunks are placed evenly on the five data centers. Let the bandwidth between the different storage nodes in the same data center be 150 Mbps while that in different data centers be  15 Mbps, as illustrated in Fig. \ref{fig:place_det}. We also consider a serial repair model.   When one node fails,  opportunistic repair chooses $d=14$ and the repair time for a $100$ Mb file is $\beta_{14}$/15 Mbps $ =2/15$ seconds. However, choosing $d=10$ gives the corresponding repair time of $\beta_{10}/15$ = $10/15$ seconds. Thus, there is a factor of five improvement in the repair time from a single failure with $d=14$ as compared to $d=10$. Similarly, there is a factor of $4$ improvement in repair rate with  2 failures ($d=13$ vs $d=10$), factor $3$ improvement in repair rate with  three failures and factor of $2$ improvement with four failures. If the repair process were exponentially distributed with these rates, the above analysis shows a performance improvement of a factor of $120$ as compared to the base system.  Thus,  the performance improvement at any stage of repair helps the system while opportunistic repair that helps use less bandwidth for multiple failure levels significantly outperforms choosing a single value of $d$.

\subsection{Random bandwidth links between different nodes}

\begin{figure}[htb] 
   \centering
   \includegraphics[trim =.1in 3.4in .1in 2.5in, clip,width=3.3in]{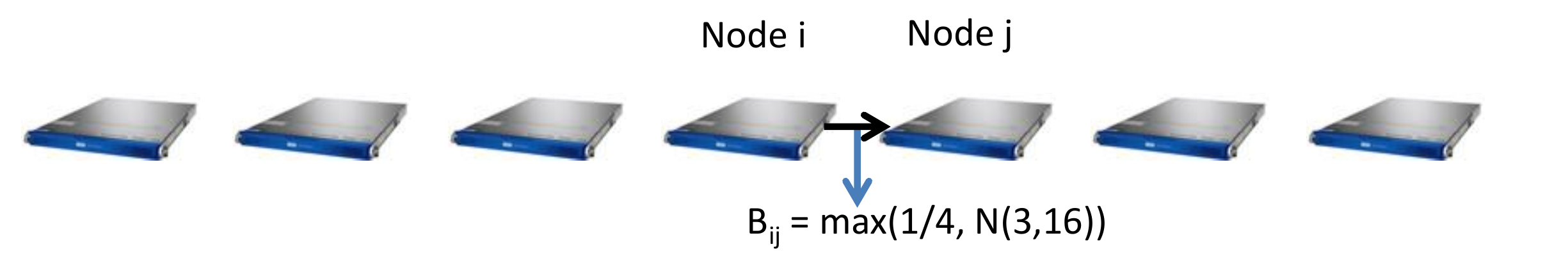}
   \caption{$n$ nodes with random bandwidth links between them}
   \label{fig:place_rand}
\end{figure}

We assume that there are $n$ nodes with the bandwidth between any two nodes is given by a maximum of a Gaussian random variable with mean 3 and variance 16 and 1/4, as illustrated in Figure \ref{fig:place_rand}. The bandwidth between any pair of nodes is independent. If we use  $d$ nodes ($d>k$) to repair (among the active nodes), there is an advantage in repair time based on Chen's model as $(d-k+1)$ which makes it beneficial to use as many nodes as possible to repair. However, in the case of realistic bandwidth, this rate is multiplied by $d^{\text{th}}$ best bandwidth to the node which is to be repaired since the data needs to be downloaded from $d$ servers. This rate decreases with $d$. Overall repair rate as a function of $d$ is the product of the two and is given as $(d-k+1) \mu $ times the $d^\text{th}$ largest bandwidth. Thus, there is an optimization needed for $d$ in order to select the value of $d$ to use. Since only a single node is repaired at a time, we choose the node with the maximum rate to repair.

We note that in practice, the value of $\mu$ may not be constant and change with time, and that multiple failed disks can be repaired in parallel. However, in this section, we ignore these factors, and only consider the value of $\mu$ to be constant and that one disk is repaired at a time.


We first see the mean time to repair when 1 node has failed. One option is to choose the best $k$ nodes to repair. This serves as a base-line without opportunistic repair. With opportunistic repair, an optimal $d$ number of nodes are chosen. We see the  average ratio of  time to repair with opportunistic repair and the  time to repair from $k$ nodes in Figure \ref{fig:time_repair_onefail} (which is given as $\mathbb{E}\frac{[k^{th} {\text largest bandwidth}]}{[\max_{k\le d<n} (d-k+1) \times d^{th} \text{largest bandwidth}]}$. We assume  that first node fails, and the average is taken  over the different choices of bandwidths), where the average is over 1000 runs for the bandwidths between different nodes. For a fixed $k=5$, Figure \ref{fig:time_repair_onefail} gives this mean ratio for different values of $n$ (Figure \ref{fig:time_repair_onefail_std} gives the standard deviation for the ratio). As $n$ increases, the options for using greater than $k$ nodes increase so that the relative time taken to repair with opportunistic repair is smaller.
\begin{figure}[htb] 
   \centering
   \includegraphics[trim =3in 1.5in 2.8in 1.2in, clip,angle=90,width=3in]{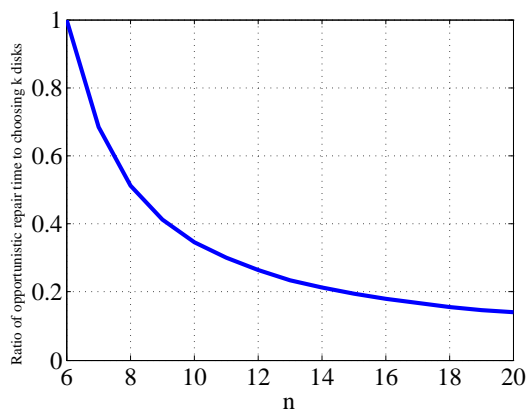}
   \caption{Average ratio of time taken to repair 1 failed node with opportunistic repair to the repair time using $k$ nodes.}
   \label{fig:time_repair_onefail}
\end{figure}

\begin{figure}[htb] 
   \centering
   \includegraphics[trim =3in 1.5in 2.8in 1.1in, clip,angle=90,width=3in]{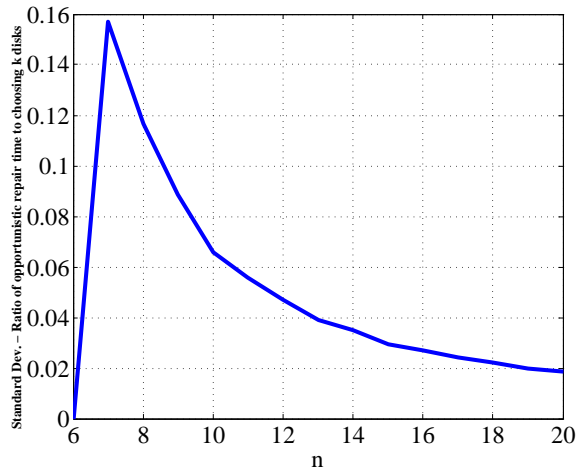}
   \caption{Standard deviation of the ratio of time taken to repair 1 failed node with opportunistic repair to the repair time using $k$ nodes.}
   \label{fig:time_repair_onefail_std}
\end{figure}

\begin{figure}[htb] 
   \centering
   \includegraphics[trim =3.4in 1.5in 3.5in 1.5in, clip,angle=90,width=3in]{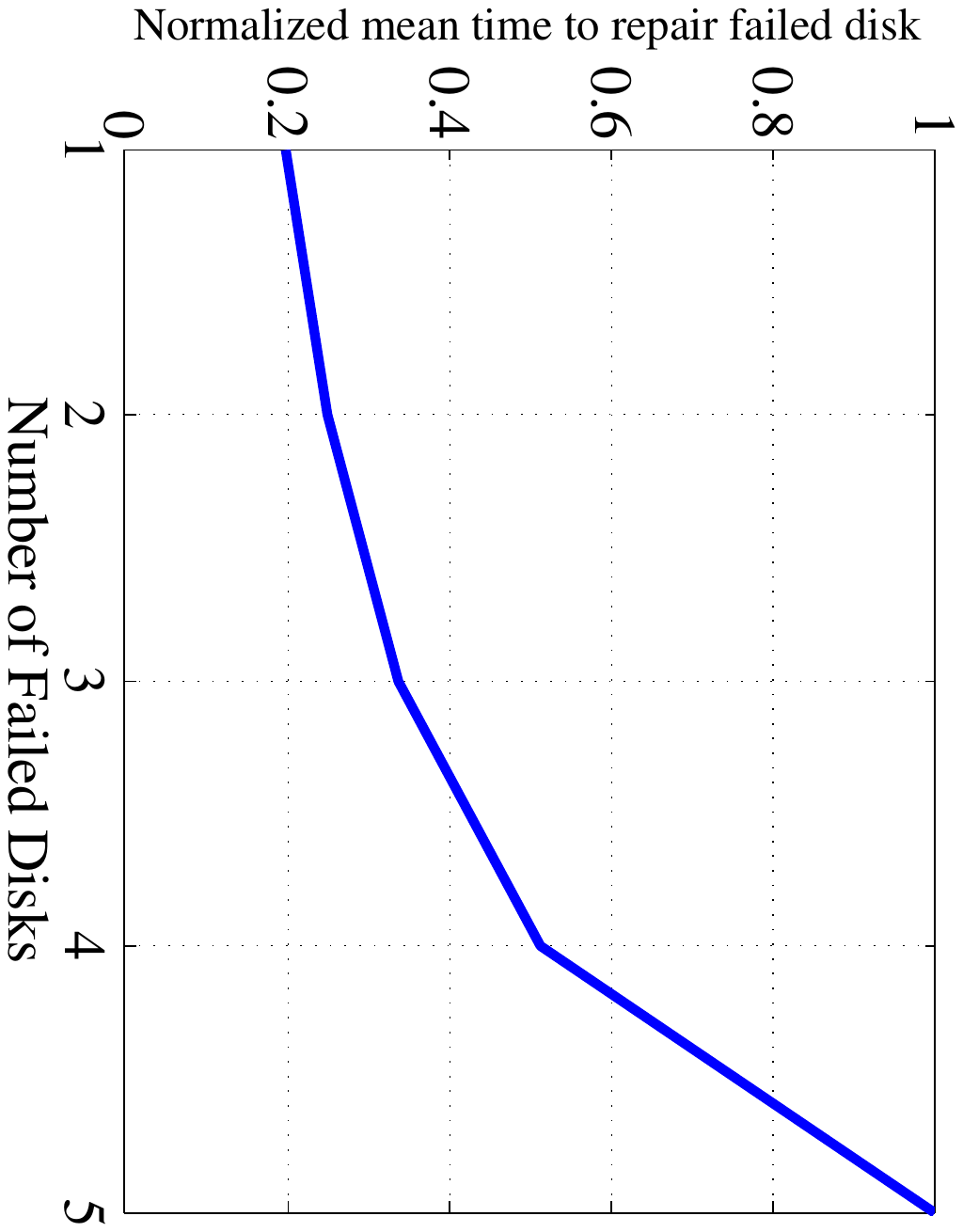}
   \caption{Average of normalized time taken to repair a failed node with opportunistic repair when multiple nodes have failed. The normalization is taken such that time taken to repair $n-k$ failed nodes is unit.}
   \label{fig:time_repair_multifail}
\end{figure}

\begin{figure}[htb] 
   \centering
   \includegraphics[trim =3.4in 1.5in 3.5in 1.4in, clip,angle=90,width=3in]{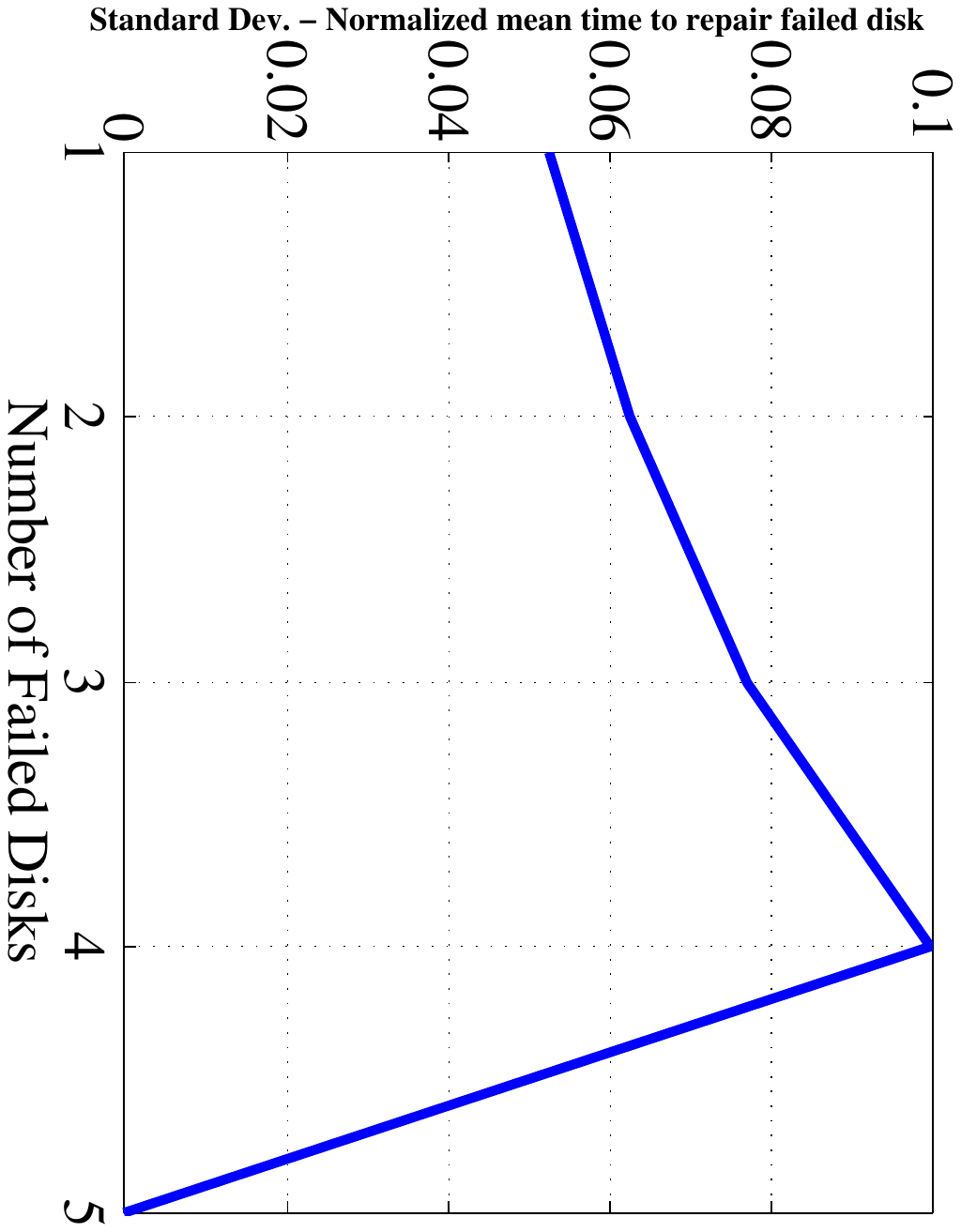}
   \caption{Standard deviation of normalized time taken to repair a failed node with opportunistic repair when multiple nodes have failed. The normalization is taken such that time taken to repair $n-k$ failed nodes is unit.}
   \label{fig:time_repair_multifail_std}
\end{figure}

We next consider $k=5$ and $n=10$. For this system,  the repair time when $1, \cdots, 4$ nodes fail is compared to the repair time when $5$ nodes fail. If $t$ nodes fail, the repair time is $\frac{c}{\max_{k\le d\le n-t} (d-k+1) B_d} $, where $B_d$ is  $d^{th}$ largest bandwidth among remaining $n-t$ nodes to the node to be repaired, for a constant $c>0$. Thus, if more nodes fail, the maximization in the denominator is over smaller range and thus the time taken to repair with opportunistic repair is larger. This is depicted in Figure \ref{fig:time_repair_multifail}, where we consider $t$ nodes failing from $1$ to $t$, and the average mean time to repair first node when $t$ nodes fail (where the average is over $10^6$ different bandwidth link configurations) is depicted. The average mean time to repair when $n-k=5$ nodes fail is normalized to unit (Figure \ref{fig:time_repair_multifail_std} gives the standard deviation for the normalized repair time).  This saving in bandwidth when less nodes fail illustrate significant savings in the mean time to data loss of the system, since the factor of improvement is intuitively the product of savings for each $d$. The calculation of exact mean time to data loss would involve calculations of failure and repair for different nodes, and since the bandwidths to a disk are not symmetric, this calculation is involved and thus not considered in the paper.



\section{Conclusions}

This paper describes a  distributed erasure coded storage system with the capability that a failed node can be repaired from a number of helper nodes $d$ that is not fixed a priori,
and investigates the repair bandwidth vs. storage tradeoff for such a system.
 This paper then demonstrate the usefulness of opportunistic repair in the form of an improvement in the mean time to data loss of the system and show that the improvement is significant even when different nodes have random bandwidth links.

In this paper, we only consider functional repair for opportunistic distributed storage systems. Even though exact repair codes has been shown to exist asymptotically at the MSR point \cite{Cadambe:11}, general constructions for exact repair are still open. We have assumed a mesh network to present the benefits of opportunistic repair, the benefits from other network topologies is an open problem. In a general network, finding the constraints on the bandwidth region between different node pairs via a probing technique is needed to be able to decide the number of nodes to access in order to repair the failed node.


\appendix
\section{Proof of  Theorem 1} \label{apdx_inner}

The proof follows a similar line as that in \cite{Dimakis:10}. As in \cite{Dimakis:10}, we construct an information flow graph which is a directed acyclic graph,
consisting of three kinds of nodes: a single data source $S$,
storage nodes $x_{in}^i$, $x_{out}^i$  and data collectors $DC_i$. The single
node $S$ corresponds to the source of the original data. Storage
node $i$ in the system is represented by a storage input node
$x_{in}^i$, and a storage output node $x_{out}^i$; these two nodes are
connected by a directed edge $x_{in}^i \to x_{out}^i$ with capacity equal
to the amount of data stored at node $i$.

Given the dynamic nature of the storage systems that we
consider, the information ﬂow graph also evolves in time. At
any given time, each vertex in the graph is either active or
inactive, depending on whether it is available in the network.
At the initial time, only the source node $S$ is active; it then
contacts an initial set of storage nodes, and connects to their
inputs ($x_{in}$) with directed edges of inﬁnite capacity. From
this point onwards, the original source node $S$ becomes and
remains inactive. At the next time step, the initially chosen
storage nodes become now active; they represent a distributed
erasure code, corresponding to the desired state of the system.
If a new node $j$ joins the system, it can only be connected
with active nodes. If the newcomer $j$ chooses to connect
with active storage node $i$, then we add a directed edge from
$x_{out}^i$ to $x_{in}^j$, with capacity equal to the amount of information
communicated from node $i$ to the newcomer. Finally, a data collector $DC$ is a node that corresponds
to a request to reconstruct the data. Data collectors connect to
subsets of active nodes through edges with infinite capacity.

An important notion associated with the information flow
graph is that of minimum cuts: A (directed) cut in the graph G
between the source $S$ and a fixed data collector node $DC$ is a
subset $C$ of edges such that, there is no directed path starting
from $S$ to $DC$ that does not have one or more edges in $C$. The
minimum cut is the cut between $S$ and $DC$ in which the total
sum of the edge capacities is smallest.

Following the approach of \cite{Dimakis:10}, it is enough to prove the following Lemma.
\begin{lemma}
Consider any (potentially infinite) information flow graph $G$, formed by having $n$ initial nodes that connect
directly to the source and obtain $\alpha$ bits, while additional nodes
join the graph by connecting to $d_j\in D$ existing nodes and obtaining
$\beta_{d_j}$ bits from each for some $d_j\in D$. Any data collector $t$ that connects to a $k$-subset of ``out-nodes'' of G must satisfy:
\begin{equation}
\text{mincut}(s,t) \ge \sum_{i=0}^{k-1}\min(\alpha, \min_{d_j \in D}(d_j -i)\beta_{d_j}).
\end{equation}
Furthermore, there exists an information flow graph $G^*$
 where this bound is matched with equality.
\end{lemma}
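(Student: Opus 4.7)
My plan is to follow the information-flow-graph (IFG) analysis of Dimakis et al.\ for the fixed-$d$ case and extend it to allow each newcomer to choose any value $d\in D$. The two things to establish are: (i) for every valid IFG $G$, $\text{mincut}(s,t)\ge \sum_{i=0}^{k-1}\min(\alpha,\, \min_{d_j\in D}(d_j-i)\beta_{d_j})$, and (ii) exhibit a specific $G^*$ achieving equality.

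For (i), I would fix an arbitrary cut $(U,\bar U)$ with $s\in U$ and $t\in\bar U$. Because $t$ is connected to its $k$ chosen out-nodes by infinite-capacity edges, each $x_{out}^{v_j}$ must lie in $\bar U$. I would order these $k$ out-nodes topologically by join time as $v_1,\dots,v_k$, and let $d^{(j)}\in D$ denote the value of $d$ that was used when $v_j$ joined. Repeating the Dimakis argument node-by-node: at $v_j$, either its interior edge $x_{in}^{v_j}\to x_{out}^{v_j}$ of capacity $\alpha$ lies in the cut, or $x_{in}^{v_j}\in\bar U$ and the cut must contain at least $d^{(j)}-(j-1)$ of $v_j$'s $d^{(j)}$ incoming helper edges of capacity $\beta_{d^{(j)}}$ each, since at most $j-1$ of $v_j$'s helpers can be among the earlier collector-nodes $\{v_1,\dots,v_{j-1}\}$. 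The Dimakis charging scheme then guarantees disjointness of the charged edges across different $j$. Thus the contribution of $v_j$ to the cut is at least $\min(\alpha,(d^{(j)}-(j-1))\beta_{d^{(j)}})\ge \min(\alpha,\min_{d_j\in D}(d_j-(j-1))\beta_{d_j})$, since $d^{(j)}\in D$. Summing over $j$ and reindexing $i=j-1$ yields the required lower bound.

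For (ii), I would construct $G^*$ with a sufficiently large pool of initial source-attached nodes (each contributing $\alpha$ through its interior edge) and then $k$ newcomers $v_1,\dots,v_k$ added sequentially: when adding $v_j$, pick $d_j^\star\in D$ attaining $\min_{d\in D}(d-(j-1))\beta_d$, and let $v_j$'s helpers be $\{v_1,\dots,v_{j-1}\}$ together with $d_j^\star-(j-1)$ fresh initial nodes. Finally attach $t$ to $x_{out}^{v_1},\dots,x_{out}^{v_k}$. A direct cut that, for each $j$, chooses between cutting $v_j$'s interior edge (cost $\alpha$) and cutting its $d_j^\star-(j-1)$ fresh incoming helper edges (cost $(d_j^\star-(j-1))\beta_{d_j^\star}$), whichever is smaller, achieves capacity exactly $\sum_{j=1}^k\min(\alpha,(d_j^\star-(j-1))\beta_{d_j^\star})$; combined with (i) this pins the mincut at the stated value.

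The main technical obstacle is confirming that the Dimakis charging/disjointness argument still goes through when the helper counts $d^{(j)}$ vary across $j$. The combinatorial content of that argument is per-newcomer and only uses the topological ordering of join times, so it should extend cleanly; the check I would perform carefully is that no helper edge is charged twice across newcomers with different $d^{(j)}$, which should follow because each charged helper edge terminates at a distinct $x_{in}^{v_j}$.
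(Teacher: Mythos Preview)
Your approach is correct and coincides with the paper's: the lower bound is precisely the topological-ordering/charging argument of Dimakis et al.\ (which the paper invokes and omits), and your $G^*$ is the same construction the paper gives---newcomer $v_j$ selects $d_j^\star=\arg\min_{d\in D}(d-(j-1))\beta_d$, takes all previous newcomers among its helpers, and the exhibited cut chooses per $j$ between the interior edge and the non-newcomer helper edges. One small correction: the lemma fixes $n$ initial nodes, so you may not enlarge the pool; but freshness is not actually used in your cut computation (helper edges into distinct $x_{in}^{v_j}$ are distinct regardless of whether helper out-nodes are shared), so simply reuse initial nodes as the paper does and $G^*$ fits within the stated class.
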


Let $e_i = \arg\min_{d_j \in D}(d_j -i)\beta_{d_j}$ for $i=0, 1, \cdots k-1$. For the statement that there exist a flow where the bound is matched with equality, we consider the setup as in Figure \ref{mincut}. In this graph, there are initially $n$ nodes labeled from $1$ to $n$. Consider $k$ newcomers labeled as
$n+1, \cdots, n+k$. The newcomer node $n+i$ connects to nodes
$n+i-e_{i-1}$, $\cdots$, $n+i-1$. Consider a data collector $t$ that connects
to the last $k$ nodes, i.e., nodes $n+1, \cdots, n+k$, and a
cut $(U,\bar{U})$ defined as follows. For each $i \in  \{1, \cdots , k\}$, if
$\alpha \le (e_i-1)\beta_{e_i}$, then we include $x_{out}^{n+i}$ in $\bar{U}$; otherwise, we
include $x_{out}^{n+i}$ and $x_{in}^{n+i}$ in $U$.  
We note that this cut $(U,\bar{U})$ achieves
the bound as in the statement of Lemma with equality.
\begin{figure*}[tb]
  \centering
      \includegraphics[trim = .15in 1in .25in 1in, clip, width=0.75\textwidth]{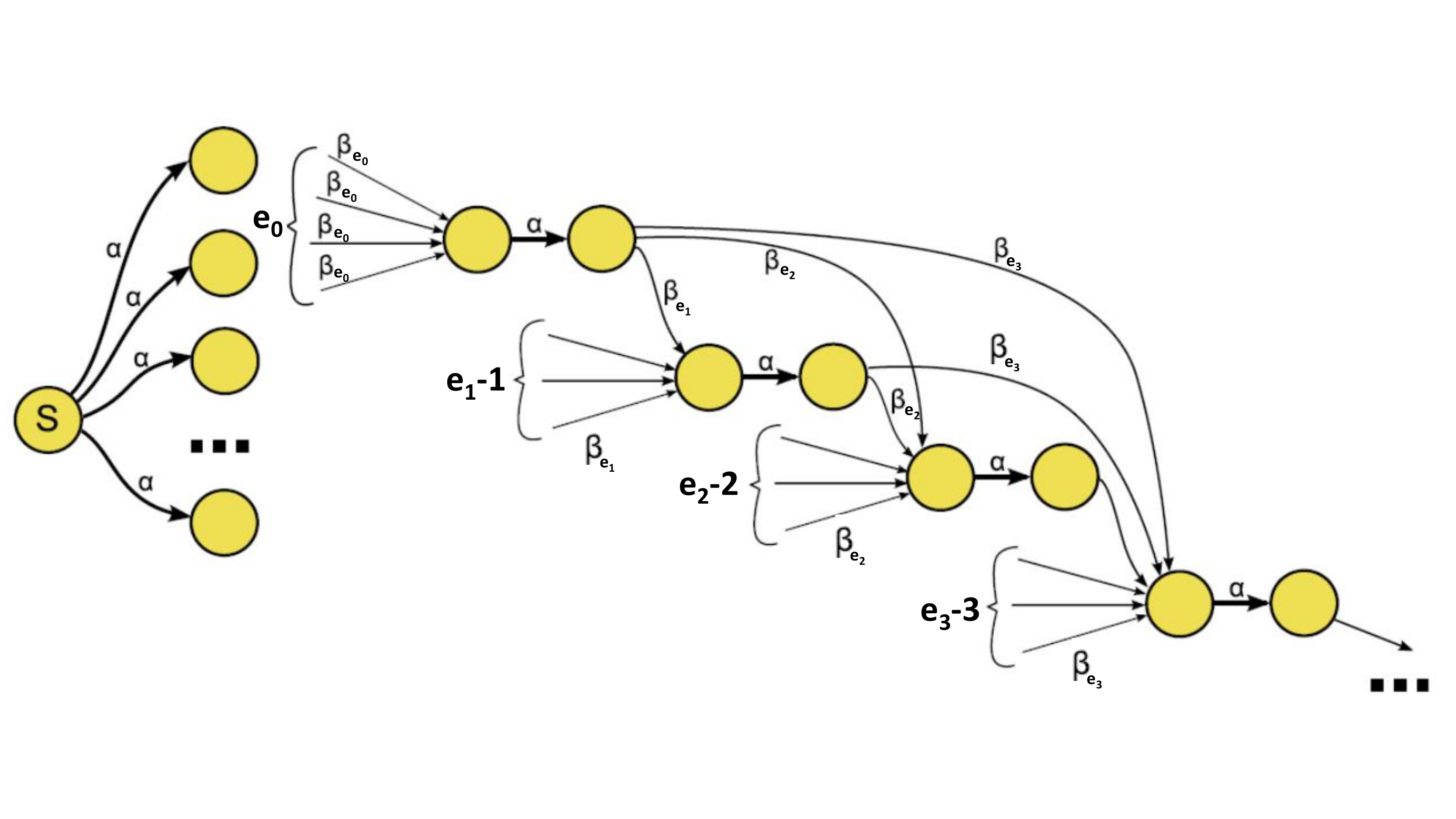}
  \caption{$G^*$ used in the proof of lemma.}\label{mincut}
\end{figure*}

The proof that every cut should satisfy the bound follows very similarly to the proof in \cite{Dimakis:10}, using the topological sorting for the graph, and is thus omitted.

\section{Proof of Theorem \ref{theorem:noloss}}
\begin{proof}
To prove this result, we start with a subset of $D$, say $D'= \{e_1, e_2\}$ for any $\{e_1,e_2\}\subseteq D$ with $e_1 > e_2$. Since $(\alpha, \beta_{e_1}^*(\alpha),  \infty)$ satisfy \eqref{main}, there is a minimum  ${\beta_{e_2}}(\alpha)$ such that $(\alpha, \beta_{e_1}^*(\alpha), {\beta_{e_2}}(\alpha) )$ satisfy \eqref{main}. We call this minimum ${\beta_{e_2}}(\alpha)$ as  $\widetilde{\beta_{e_2}}(\alpha)$. For ${\beta_{e_2}}(\alpha)=\widetilde{\beta_{e_2}}(\alpha)$, \eqref{main} will be satisfied with equality since if not, the value of $\widetilde{\beta_{e_2}}(\alpha)$ is not optimal.

Thus, we have the following equations
\begin{eqnarray}
\sum_{i=0}^{k-1}\min(\alpha, (e_1 -i)\beta_{e_1}^*(\alpha)) &=& \mathsf{M},\\
\sum_{i=0}^{k-1}\min(\alpha, (e_1 -i)\beta_{e_1}^*(\alpha), (e_2 -i)\widetilde{\beta_{e_2}}(\alpha)) &=& \mathsf{M},
\end{eqnarray}

Since each term inside the summation in the second expression is at-most that in the first expression, we have
\begin{equation}
(e_2 -i)\widetilde{\beta_{e_2}}(\alpha) \ge \min(\alpha, (e_1 -i)\beta_{e_1}^*(\alpha)),
\end{equation}
for all $0\le i\le k-1$. Since, $\widetilde{\beta_{e_2}}(\alpha)$ is the minimum possible ${\beta_{e_2}}(\alpha)$ satisfying the above, we have
\begin{eqnarray}
\widetilde{\beta_{e_2}}(\alpha) &=& \min_{i=0}^{k-1} \frac{\min(\alpha, (e_1 -i)\beta_{e_1}^*(\alpha))}{e_2-i}\\
&=&  \min_{i=0}^{k-1} \min(\frac{\alpha}{e_2-i}, \frac{e_1 -i}{e_2-i}\beta_{e_1}^*(\alpha))
\end{eqnarray}
Since both the terms in the minimum increase with $i$, we have that the minimum of these terms is non-decreasing with $i$, and thus
\begin{equation}
\widetilde{\beta_{e_2}}(\alpha) = \min(\frac{\alpha}{e_2-k+1}, \frac{e_1 -k+1}{e_2-k+1}\beta_{e_1}^*(\alpha))
\end{equation}
Further, we note that $\alpha < (e_1 -k+1)\beta_{e_1}^*(\alpha)$ is not possible since it violates the optimality of $\beta_{e_1}^*(\alpha)$ and thus, we have
\begin{equation}
\widetilde{\beta_{e_2}}(\alpha) = \frac{e_1 -k+1}{e_2-k+1}\beta_{e_1}^*(\alpha)
\end{equation}

It now remains to be seen as to when is $\widetilde{\beta_{e_2}}(\alpha) = \beta_{e_2}^*(\alpha)$. If $\widetilde{\beta_{e_2}}(\alpha) = \beta_{e_2}^*(\alpha)$, we have the following
\begin{equation}
\sum_{i=0}^{k-1}\min(\alpha, (e_2 -i)\frac{e_1 -k+1}{e_2-k+1}\beta_{e_1}^*(\alpha)) = \mathsf{M}.
\end{equation}
Since we know that $\sum_{i=0}^{k-1}\min(\alpha, (e_1 -i)\beta_{e_1}^*(\alpha)) = \mathsf{M}$ and $(e_2 -i)\frac{e_1 -k+1}{e_2-k+1}\ge (e_1 -i)$, we have that
\begin{equation}
\min(\alpha, (e_2 -i)\frac{e_1 -k+1}{e_2-k+1}\beta_{e_1}^*(\alpha)) = \min(\alpha, (e_1 -i)\beta_{e_1}^*(\alpha)),
\end{equation}
for all $0\le i \le k-1$. Since for $i=k-1$, the two sides are exactly the same and thus the above holds for $k=1$.  Thus for $k>1$, we need the two sides to be equal for $0\le i\le k-2$.  Since for $0\le i<k-1$, $e_1 -i < (e_2 -i)\frac{e_1 -k+1}{e_2-k+1}$, we have that the above holds if and only if
\begin{equation}
\alpha \le (e_1-k+2)\beta_{e_1}^*(\alpha).
\end{equation}

From the expression of $\beta_{e_1}^*(\alpha)$, we have that this happens if and only if $\alpha \le \frac{\mathsf{M}(e_1-k+2)}{k(e_1-k+2)-1}$.

Thus, we see that if $\alpha > \frac{\mathsf{M}(d_1-k+2)}{k(d_1-k+2)-1}$, $(\alpha, \beta_{d_1}^*(\alpha),  \cdots, \beta_{d_l}^*(\alpha))$ do not satisfy \eqref{main}. Further, if $\alpha \le \frac{\mathsf{M}(d_1-k+2)}{k(d_1-k+2)-1}$, the same approach shows that $(\alpha, \beta_{d_1}^*(\alpha),  \cdots, \beta_{d_l}^*(\alpha))$ satisfies \eqref{main}.
\end{proof}

\section{Proof of  Theorem \ref{thm_MTTDL}} \label{apdx_MTTDL}

Let $P_c(t)$ denote the probability that $c$ nodes are active at time $t$. The differential equations corresponding to the change of state are given by
\begin{eqnarray}
\frac{dP_n(t)}{dt} &=& -n\lambda P_n(t) + (n-k)\mu P_{n-1}(t) \\
\frac{dP_{n-c}(t)}{dt} &=& (n-c+1)\lambda P_{n-c+1}(t) -((n-c)\lambda \nonumber\\&&
+ (n-k-c+1)\mu) P_{n-c}(t)\nonumber\\&& + (n-k-c)\mu P_{n-c-1}(t)\nonumber\\&&(\text{ for $1\le c \le n-k+1$})\\
\frac{dP_{k}(t)}{dt} &=& (k+1)\lambda P_{k+1}(t) - (k\lambda + \mu) P_k(t) \\
\frac{dP_{k-1}(t)}{dt} &=& k\lambda P_k(t).
\end{eqnarray}

Let $\overline{P}(t) = \left[\begin{matrix}
  P_n(t)  \\
  \cdots  \\
  P_{k-1}(t)
 \end{matrix}\right]$.

Solving the differential equations, we have $\overline{P}(t) = \exp(At)\overline{P}(0)$, where $A$ is a tri-diagonal matrix with each column sum as zero, and is given as $A=$
\begin{equation}
\left[\begin{matrix}
  -n\lambda & (n-k)\mu & 0 & \cdots & 0 & 0 \\
n\lambda & -(n-1)\lambda-(n-k)\mu & (n-k)\mu & \cdots & \cdots & 0 \\
\cdots & \cdots & \cdots & \cdots & 0 & 0 \\
0 & 0 & 0 & \cdots & -k\lambda-\mu & 0 \\
0 & 0 & 0 & \cdots & k\lambda & 0
 \end{matrix}\right].
\end{equation}

Since the system starts from the state with all nodes being active, we have $\overline{P}(0)$ being 1 only in the first element, and zero elsewhere. The mean time to data loss is given as
\begin{equation}
MTTDL = \int_{t=0}^\infty \sum_{r = k}^n P_r(t) dt
\end{equation}

Note that by Final Value Theorem for Laplace Transform \cite{laplace}, we have
\begin{equation}
MTTDL = \lim_{s\to 0} [1 1 1 \cdots 1 0] (sI-A)^{-1}[1 0 \cdots 0]^T.
\end{equation}

We let $B \triangleq adj(sI-A)$, where $adj(.)$ represents adjoint of the argument, and let $B_{ij}$ be the element corresponding to $i^{th}$ row and $j^{th}$ column. Then,

\begin{equation}
MTTDL = \lim_{s\to 0} \frac{\sum_{r=1}^{n-k+1}B_{1r}}{\det(sI-A)} \label{main_MTTDL}.
\end{equation}

We will now give a result that is a key result in evaluation of the determinant and the adjoints.

\begin{lemma}\label{matrix_det}
Let $M$ be a $l \times l $ tri-diagonal matrix, such that the $r^{th}$ diagonal element is $c_r\lambda + b_r \mu$, the upper diagonal element $M_{r,r+1} = -b_{r+1}\mu$, and the lower diagonal element $M_{r+1,r} = -c_r \lambda$. Then, the determinant of matrix $M$ is given as
\begin{equation}
\det(M) = \sum_{i=0}^l (b_1 b_2\cdots b_{l-i})(c_{l-i+1}\cdots c_l)\mu^{l-i}\lambda^i
\end{equation}
\end{lemma}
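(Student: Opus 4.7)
The plan is to prove the formula by induction on the matrix size $l$, using cofactor expansion along the last row to obtain a standard three-term recurrence.

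For the base cases, one checks $l=0$ (with the convention that empty products equal $1$, both sides are $1$) and $l=1$, where $\det(M) = b_1\mu + c_1\lambda$ matches the two terms of the right-hand side. For the inductive step, let $D_m$ denote the determinant of the leading principal $m\times m$ submatrix $M_m$ (built from $b_1,\ldots,b_m$ and $c_1,\ldots,c_m$). Since $M_l$ is tri-diagonal, the last row contains only the two nonzero entries $M_{l,l-1} = -c_{l-1}\lambda$ and $M_{l,l} = c_l\lambda + b_l\mu$. Cofactor expansion along this row, combined with the fact that the $(l,l-1)$ minor contains the single nonzero off-diagonal entry $M_{l-1,l} = -b_l\mu$ times $D_{l-2}$, yields the recurrence
\begin{equation}
D_l \;=\; (c_l\lambda + b_l\mu)\,D_{l-1} \;-\; b_l\,c_{l-1}\,\lambda\mu\,D_{l-2}.
\end{equation}

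To complete the induction, introduce the shorthand $T(m,j) \triangleq (b_1\cdots b_j)(c_{j+1}\cdots c_m)$ (empty products equal to $1$), so the claim reads $D_m = \sum_{j=0}^{m} T(m,j)\mu^{j}\lambda^{m-j}$. Substituting the inductive hypothesis into the recurrence, the piece $(c_l\lambda)D_{l-1}$ immediately produces $\sum_{j=0}^{l-1} T(l,j)\mu^{j}\lambda^{l-j}$ via the trivial identity $c_l\cdot T(l-1,j) = T(l,j)$. The remaining two pieces, $(b_l\mu)D_{l-1}$ and $-b_l c_{l-1}\lambda\mu D_{l-2}$, must combine to contribute only the missing term $T(l,l)\mu^l = (b_1\cdots b_l)\mu^l$. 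After a shift of summation index both of these are sums over $j=1,\ldots,l$ (respectively $1,\ldots,l-1$) with weight $\mu^{j}\lambda^{l-j}$, and the coefficients differ only by the extra factor $c_{l-1}$ in the $c$-product; the identity $c_{l-1}\cdot T(l-2,j-1) = T(l-1,j-1)$ for $1 \leq j \leq l-1$ produces a termwise cancellation, leaving only the unpaired boundary term at $j=l$.

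Adding the two contributions recovers $D_l = \sum_{j=0}^{l} T(l,j)\mu^{j}\lambda^{l-j}$, which is the stated formula after reindexing $i = l-j$. The only delicate point in the argument is the boundary bookkeeping in the last step: verifying that exactly the $j=l$ term of $(b_l\mu)D_{l-1}$ survives cancellation, and that the shifted-index versions of $T(l-1,\cdot)$ and $c_{l-1}T(l-2,\cdot)$ align correctly; once this is handled, everything else is routine algebra. An alternative, more combinatorial route would be to observe that non-zero permutations in the Leibniz expansion of a tri-diagonal matrix correspond to matchings on the path graph $1{-}2{-}\cdots{-}l$, and the formula falls out directly from grouping matchings by the set of unmatched vertices; however the inductive proof above is shorter and more self-contained.
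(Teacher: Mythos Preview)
Your proof is correct and follows the same approach the paper indicates: the paper simply states that the result ``can be shown to hold by induction and thus the proof is omitted,'' and your argument supplies exactly that induction via the standard tri-diagonal recurrence $D_l = (c_l\lambda + b_l\mu)D_{l-1} - b_l c_{l-1}\lambda\mu\,D_{l-2}$, with the bookkeeping carried out correctly.
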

\begin{proof}
The result can be shown to hold by induction and thus the proof is omitted.
\end{proof}

When we construct matrix $sI-A$, we see that there is no element in the last row or last column except the diagonal element which is $s$. Thus, the determinant of $sI-A$ is given as $s$ times the determinant of the first $n-k+1 \times n-k+1$ matrix. Further, this is equal to $s$ times the determinant of the first $n-k+1 \times n-k+1$ matrix when $s=0$ $+ o(s)$. For $s=0$, the first $n-k+1 \times n-k+1$ is given in the tri-diagonal form in Lemma \ref{matrix_det} with $b_1=0$, $c_r = (n-r+1)$. Thus, we have
\begin{eqnarray}
\det(sI-A) &=& s n(n-1)\cdots k \lambda^{n-k+1} + o(s) \nonumber\\&=& s \frac{n!}{(k-1)!} \lambda^{n-k+1} + o(s)
\end{eqnarray}

Using similar approach, we see that
\begin{eqnarray}
B_{11} &=& s \sum_{i=0}^{n-k}\left((n-k)(n-k-1)\cdots \right.\nonumber\\&&\left. (n-k-i+1)\right)\nonumber\\&&\left((n-i-1)\cdots k \right)\mu^i \lambda^{n-k-i} +o(s)\\
&=&s \frac{(n-k)!}{(k-1)!}\sum_{i=0}^{n-k}\mu^i \lambda^{n-k-i} \nonumber\\&&\frac{(n-i-1)!}{(n-k-i)!} + o(s)
\end{eqnarray}

Let $F_n \triangleq \frac{(n-k)!}{(k-1)!}\sum_{i=0}^{n-k}\mu^i \lambda^{n-k-i} \frac{(n-i-1)!}{(n-k-i)!}$. Then, $B_{11} = sF_n +o(s)$.
Similarly solving other terms, we have
$B_{1(l+1)}= s \frac{n!}{(n-l)!} \lambda^l F_{n-l} +o(s)$ for $l=0, \cdots n-k$. Thus, the overall mean time to data loss is given as
\begin{eqnarray}
MTTDL &=& \lim_{s\to 0} \frac{\sum_{r=1}^{n-k+2}B_{1r}}{\det(sI-A)} \\
&=& \lim_{s\to 0} \frac{\sum_{l=0}^{n-k}B_{1(l+1)}}{s \frac{n!}{(k-1)!} \lambda^{n-k+1} + o(s)} \\
&=& \lim_{s\to 0} \frac{s \sum_{l=0}^{n-k} \frac{n!}{(n-l)!} \lambda^l F_{n-l} + o(s)}{s \frac{n!}{(k-1)!} \lambda^{n-k+1} + o(s)} \\
&=& \frac{\sum_{l=0}^{n-k} \frac{n!}{(n-l)!} \lambda^l F_{n-l}}{ \frac{n!}{(k-1)!} \lambda^{n-k+1} }
\end{eqnarray}
Solving this expression gives the result as in the statement of the Theorem after some manipulations.



\end{document}